\newcommand*{\defeq}{\mathrel{\rlap{%
			\raisebox{0.3ex}{$\m@th\cdot$}}%
		\raisebox{-0.3ex}{$\m@th\cdot$}}%
	=}
\def\thmheadbrackets#1#2#3{%
	\thmname{#1}\thmnumber{\@ifnotempty{#1}{ }\@upn{#2}}%
	\thmnote{ {\the\thm@notefont[#3]}}}
\newtheoremstyle{brakets}
{}
{}
{\normalfont}
{}
{\bfseries}
{.}
{ }
{\thmheadbrackets{#1}{#2}{#3}}
\newtheoremstyle{defbrakets}
{}
{}
{\normalfont}
{}
{\bfseries}
{.}
{ }
{\thmheadbrackets{#1}{#2}{#3}}
\newtheoremstyle{defproblem}
{}
{}
{\normalfont}
{}
{\bfseries}
{.}
{ }
{\thmheadbrackets{#1}{#2}{#3}}
\theoremstyle{brakets}
\newtheorem{thm}{Theorem}
\theoremstyle{defbrakets}
\newtheorem{lem}[thm]{Lemma}
\newtheorem{defn}[thm]{Definition}
\newtheorem{prop}{Proposition}
\newtheorem{exmp}{Example}
\theoremstyle{defproblem}
\newtheorem{rem}[thm]{Remark}
\newcommand{\z}[1]{\textcolor{black}{#1}}
\begin{document}

	\title{\bfseries Traffic Divergence Theory: An Analysis Formalism for Dynamic Networks}
\author{Matin~Macktoobian\footnote{matin.macktoobian@mail.utoronto.ca}, Zhan Shu, and Qing Zhao}%
\date{Electrical and Computer Engineering Department\\ University of Alberta\\Edmonton, AB, Canada}
\maketitle

\begin{textblock}{14}(6,1)
	\noindent\textbf{\color{red}Published in 
		``IEEE Access'' \\DOI: 10.1109/ACCESS.2024.3383436
		}
\end{textblock}

\begin{abstract}
Traffic dynamics is universally crucial in analyzing and designing almost any network. This article introduces a novel
theoretical approach to analyzing network traffic dynamics. This theory's machinery is based on the notion of traffic divergence, which captures the flow (im)balance of network nodes and links. It features various analytical probes to investigate both spatial and temporal traffic dynamics. In particular, the maximal traffic distribution
in a network can be characterized by spatial traffic divergence rate, which reveals the relative difference among node traffic
divergence. To illustrate the usefulness, we apply the theory to two network-driven problems: throughput estimation of data center networks and power-optimized communication planning for robot networks, and show the merits of the proposed theory through simulations.
\end{abstract}

\textbf{keywords}: Dynamic Networks, Network Traffic Dynamics, Traffic Divergence

\maketitle
\doublespacing
\section{Introduction}
\label{sec:introduction}
Network traffic is critical in various network operations such as routing, congestion control, traffic anomaly detection, planning and scheduling, etc. The ubiquitous importance of these operations in almost any class of networks, from computer and data center networks to software and ad-hoc ones, dictates the necessity of powerful tools to represent network traffic, their flows, and distribution. Given the existing modeling strategies in the literature, one may mainly classify those traffic models into two categories based on their generality and applicability to different types of networks and operations.
\subsection{Related Work}
\z{The first traffic model category is often reformulated and specifically customized from one application domain to another. The major bulk of these strategies belongs to self-similarity seeking methods \cite{barakat2003modeling,jelenkovic1995algorithmic}. Following the idea of the similar reaction to similar patterns of traffic, these models mostly rely on various probabilistic distributions \cite{lee2005stochastic} to model traffic as fractional Brownian motion and even geometrical intuitions, e.g., those borrowed from the theory of fractals  \cite{nogueira2003modeling}. Then, one usually applies observer operators to those distributions to determine metrics of interest by measuring desired states of a network. The efficient scalability of these models comes at the cost of their relatively inaccurate assessments. The larger a network becomes, the more problematic such inaccuracies will become, particularly in the case of performance factors like latency. Moreover, self-similarity conventions constructing these models assume some degrees of relative homogeneity associated with traffic distributed in a network. This assumption may generally not be valid in the majority of networks involving structural and/or data heterogeneity, including computer \cite{dudin2003optimal}, social \cite{xia2018exploring}, ad-hoc \cite{umer2018dual}, and software \cite{nishanbayev2019model} ones. Machine learning techniques have also been explored to identify various traffic-driven aspects of networks \cite{macktoobian2021data,macktoobian2023topology}. For instance, Principal Component Analysis \cite{abdi2010principal} was applied to structures of complex networks to infer traffic-related conclusions via reducing their dimensionality \cite{lakhina2004structural}. Recurrent neural networks applied to dynamic states of computer networks, in the form of time series, are employed for congestion control forecasting and anomaly detection \cite{madan2018predicting}. These methods are overall flexible in design, but transferring traffic knowledge between different areas of complex networks while preserving highly-accurate predictions is challenging \cite{mohammed2011survey, papadogiannaki2021survey}.}

\z{In contrast to the specificity of self-similarity methods, network flow theory \cite{latham2006network,ford1956network} is an attempt to model network traffic generally. This formalism, derived from the graph theory, was originally invented to solve the maximum flow problem \cite{han2014maximum}, i.e., finding the largest net flow from a source node to a destination one. Different graph-based versions of this problem were solved by Dinic's algorithm \cite{tarjan1986algorithms}, EK algorithm \cite{lammich2016formalizing}, MPM algorithm \cite{malhotra1978v}, and Orlin algorithm \cite{orlin2013max}. Due to the limited tools in the network flow theory (residuals, augmenting paths, etc.), the cited algorithms act on graphs globally. Hence, they cannot be effectively scaled for exceedingly-large graphs such as those of data center topologies and social networks. Furthermore, this theory does not exhibit any machinery to analyze temporal and/or spatial dynamics of network traffic distribution. A complex version of the maximum flow problem, i.e., the multi-commodity flow problem \cite{kumar2018semi} to which various system-level constraints may be applied, has emerged as a fundamental hurdle in computing different network characteristics such as throughput. Namely, the NP-complete nature of this problem makes it difficult to be used to compute traffic-based metrics of large-scale networks.}
\subsection{Contribution}
In this article, we propose a novel concept, i.e., \textit{Traffic Divergence (TD)}, and related formalism that not only features a general solid tool to model traffic flow and distribution in networks but also reveals new implications about spatial and/or temporal traffic dynamics in such networks. Compared to the reviewed strategies above, here are the major advantages and merits exhibiting the contributions of this article.
\begin{itemize}
	\item This theory is inherently suitable for analysing traffic- and flow-based variations in networks. Accordingly, it provides a unified approach to modeling network traffic dynamics for incoming and outsourcing flows of both nodes and links.
	\item Contrary to the network flow theory, our theory yields global traffic-driven conclusions about large-scale networks based on their local dynamisms in node neighborhoods. This localizability feature particularly contributes to realizing computationally efficient and scalable traffic dynamical analyses.
	\item Our traffic divergence theory is generally applicable to a wide range of networks. Particularly, this theory is expressive enough to model different problem statements and constraints associated with various networks such as computer and ad-hoc networks. In this regard, we later supply examples of how our theory may be applied to traffic-related problems associated with data center networks and ad-hoc robot networks.
\end{itemize}
\subsection{Organization}
This article is organized as follows. Section \ref{sec:mach} and \ref{sec:max} introduce the network model and the formalism of TD. Section \ref{sec:app} is devoted to the discussion of two network applications that our theory contributes to the improvement of their state-of-the-art solutions. Namely, in Section \ref{subsec:throu}, we employ our theory to develop a congestion-aware routing algorithm for data center topologies. Section \ref{subsec:adhoc} utilizes the Traffic Divergence Theory to formulate an optimization problem to address power-optimized communication planning for ad-hoc robot networks. \z{Section \ref{sec:disc} discuss some limitations of our work based on which new potential streams of research are proposed.} Our conclusions are provided in Section \ref{sec:conc}.
\section{The Machinery of Traffic Divergence}
\label{sec:mach}
In this section, we establish a flow-based formalism to provide an analytical tool for various network analyses at the heart of which traffic distribution is a pivotal concept. For this purpose, we first establish the general network model on which TD is developed.
\subsection{Network Model}
\label{subsec:netmod}
\z{In our theory, a \emph{network} is an undirected graph $\left(\mathscr{V},\mathscr{E}\right)$ consisting of a set of nodes $\mathscr{V}$ and a set of links $\mathscr{E}\subseteq\mathscr{V} \times \mathscr{V}$ connecting them. For a node $u \in \mathscr{V}$, its neighbor set is defined as 
	\begin{equation}
		\mathscr{N}_{u} \defeq \{z\in\mathscr{V}\!\setminus\!\{u\} \mid \mathscr{A}_{uz} \neq 0 \}, 
	\end{equation}	
	where $\mathscr{A}$ denotes the adjacency matrix of the graph (\cite{weisstein2007adjacency}).  For a node $u \in \mathscr{V}$ and a node $v \in \mathscr{N}_{u}$, $f_{uv}\left(t,\Delta\right)$ ($t,\Delta \in \mathbb{R}^+$) denotes the positive \emph{average flow rate} from $u$ to $v$ over the time period $\left[t,t+\Delta\right]$, e.g., the amount of materials transported or the data packets transmitted divided by $\Delta$, as shown in Figure \ref{fig:flow}. For the case that $\Delta \to 0^+$, $f_{uv}\left(t,\Delta\right)$ should be regarded as the positive \emph{instantaneous flow rate} at time $t$. In the remaining part of the paper, $f_{uv}\left(t,\Delta\right)$ is abbreviate to $f_{uv}$ for notational simplicity unless the time period $\Delta$ is of interest. A \emph{route} from node $u \in \mathscr{V}$ to node $v \in \mathscr{V}$ in a network is a path connecting $u$ and $v$ in which all nodes and links are distinct. It is assumed that the length of any route, the number of links involved, in the network is finite. Also, self-loops or cycles, i.e., links connecting a node to itself, are excluded in the network.}

\z{Each node is supposed to have limited processing capacity, resulting in unequal or imbalanced input and output. For example, $n_{1}$ packets are transmitted into a switch in a data network over a period of time, but only $n_2$ packets, where $n_{2} < n_{1}$, flow out from the switch over the same period. The remaining $n_1-n_2$ packets are either stored in the switch for future processing or discarded due to congestion. We will utilize this imbalance to characterize network traffic as shown later.}

\z{Each link is assumed to have limited transmission capacity. More specifically, when $\Delta$ is fixed, $f_{uv}\left(t,\Delta\right)$ is uniformly bounded for all $t\in \mathbb{R}^+$ and any feasible $u$ and $v$.}
\subsection{Node, Link, and Route TD}
Inspired by the notion of sink (input) and source (output) flows in the network theory (see, e.g., \cite{megiddo1974optimal,ahlswede2000network,carmi2008transport}), we develop the concept of TD for nodes, links, and routes in this subsection.
\begin{figure}
	\centering\includegraphics[scale=1.8]{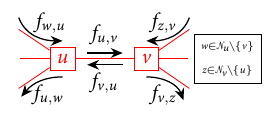}
	\caption{Traffic flow notation associated with nodes of a network.}
	\label{fig:flow}
\end{figure}
\begin{defn}[Node TD]
	\label{defn:d}
	Given a network with node set $\mathscr{V}$, the TD of node $u \in \mathscr{V}$, denoted by $\nabla_{u}$, reads as the difference of the sink flows entering the node and the source flows leaving it, i.e., 
	\begin{equation}
		\nabla_{u} \defeq \sum_{z \in \mathscr{N}_{u}}^{} [f_{zu} - f_{uz}].
	\end{equation}
\end{defn}
\begin{exmp}
	\label{exmp:1}
	Suppose a node set $\mathscr{V}=\{u, z_{1}, z_{2}\}$ and flows $f_{uz_{1}} = 4$, $f_{u,z_{2}} = \sin t$, $f_{z_{1},u} = \cos t$, and $f_{z_{2},u} = 1$. Then, according to Definition \ref{defn:d}, the TD of $u$ can be calculated as $\nabla_{u} = (1+\cos t) - (\sin t+4) = \cos t -\sin t -3$.
\end{exmp}
\begin{defn}[Link TD]
	\label{defn:ltd}
	Given a network with node set $\mathscr{V}$, suppose that $u \in \mathscr{V}$ and $v \in \mathscr{N}_{u}$. Then, the TD of the link connecting $u$ and $v$, denoted as $\nabla_{u,v}$, is defined as the difference of the net flows entering the link and the net flows departing it, namely, 
	\begin{equation}
		\nabla_{u,v} \defeq \smashoperator{\sum_{\substack{w\in\{u,v\},\\ z\in \mathscr{V}\setminus\{u,v\}}}} [f_{zw} - f_{wz}].
	\end{equation}
\end{defn}
\begin{exmp}
	\label{exmp:2}
	Assume that the node set of a network is $\mathscr{V}=\{u, v, z_{1}, z_{2}, z_{3}\}$ associated with flows $f_{u,z_{1}} = 1$, $f_{u,z_{2}} = t^2/e^t$, $f_{z_{1},u} = 2$, and $f_{z_{3},v} = \sin t$, according to Definition \ref{defn:ltd}, the TD of the link connecting $u$ to $v$ can be calculated as $\nabla_{u,v} = (2+\sin t) - (1+t^2/e^t) = 1+\sin t-t^2/e^t$.
\end{exmp}
The proposition below shows that the TD of a link can be factorized as the sum of the TD of corresponding nodes.
\begin{prop}[Node-Link TD Correspondence]
	\label{prop:ltc}
	Given a network with node set $\mathscr{V}$, suppose that $u \in \mathscr{V}$ and $v \in \mathscr{N}_{u}$. Then, the TD of the link connecting $u$ and $v$ can be factorized as 
	\begin{equation}
		\nabla_{u,v} = \nabla_{u} + \nabla_{v}.
	\end{equation}
\end{prop}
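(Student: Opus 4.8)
The plan is to prove the identity by direct expansion of the two node TDs and matching the resulting terms against the definition of the link TD. The only structural fact I need is the convention, already implicit in Definition \ref{defn:ltd}, that the flow between two nodes vanishes whenever they are not adjacent; this lets me rewrite the neighbor-indexed sum of Definition \ref{defn:d} as a sum over the entire vertex set, since every term indexed by a non-neighbor contributes zero. Concretely, I would first record
\begin{equation}
	\nabla_{u} = \smashoperator{\sum_{z \in \mathscr{V}\setminus\{u\}}} [f_{zu} - f_{uz}], \qquad \nabla_{v} = \smashoperator{\sum_{z \in \mathscr{V}\setminus\{v\}}} [f_{zv} - f_{vz}].
\end{equation}

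Next I would isolate, in each of these sums, the single term that references the partner endpoint of the link. In $\nabla_{u}$ the index $z = v$ is present (because $v \in \mathscr{N}_{u}$) and contributes $f_{vu} - f_{uv}$; symmetrically, the index $z = u$ in $\nabla_{v}$ contributes $f_{uv} - f_{vu}$. Peeling these off yields
\begin{equation}
	\nabla_{u} + \nabla_{v} = \Bigl(\smashoperator{\sum_{z \in \mathscr{V}\setminus\{u,v\}}} [f_{zu} - f_{uz}] + \smashoperator{\sum_{z \in \mathscr{V}\setminus\{u,v\}}} [f_{zv} - f_{vz}]\Bigr) + \bigl([f_{vu} - f_{uv}] + [f_{uv} - f_{vu}]\bigr).
\end{equation}
The crux is that the two peeled-off terms are exact negatives of one another, so the second bracket vanishes: the traffic carried on the link itself appears as an outflow for one endpoint and an inflow for the other, and these cancel in the sum. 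The surviving double sum is precisely $\sum_{w\in\{u,v\}}\sum_{z\in\mathscr{V}\setminus\{u,v\}}[f_{zw}-f_{wz}]$, which equals $\nabla_{u,v}$ by Definition \ref{defn:ltd}, closing the argument.

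The main obstacle is bookkeeping rather than depth: I must reconcile the index sets carefully, namely that the node sums run over $\mathscr{V}\setminus\{u\}$ and $\mathscr{V}\setminus\{v\}$ while the link sum runs over $\mathscr{V}\setminus\{u,v\}$, and verify that the mutual-flow terms $f_{uv}$ and $f_{vu}$ are exactly the discrepancy between these ranges. The conceptual point worth stating explicitly is that the link TD is insensitive to the internal traffic between its two endpoints and depends only on the flows crossing the boundary of $\{u,v\}$; the proposition formalizes this by showing that summing the endpoint TDs annihilates that internal traffic.
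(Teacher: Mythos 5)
Your proof is correct and rests on the same key observation as the paper's: the mutual flows $f_{uv}$ and $f_{vu}$ appear with opposite signs in the two node TDs and cancel, leaving exactly the boundary flows of $\{u,v\}$ that define $\nabla_{u,v}$. The paper runs the computation in the opposite direction (expanding $\nabla_{u,v}$ into sink and source components and recovering $\nabla_u+\nabla_v$), but this is the same argument; your version is, if anything, more careful about the index sets and the convention that flows between non-adjacent nodes vanish.
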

\begin{proof}
	See, Appendix \ref{app:ltc}.
\end{proof}
In view of the two definitions above, the TD of a route may be defined as the difference of the external net flows entering its nodes and the net flows departing those nodes.
\begin{defn}[Route TD]
	Given a network with node set $\mathscr{V}$, suppose $\Omega$ is a route including a set of nodes $\mathscr{R} \subseteq \mathscr{V}$. Then, the traffic divergence of $\Omega$, written as $\nabla_{\Omega}$, is defined as 
	\begin{equation}
		\nabla_{\Omega} \defeq \sum_{u\in\mathscr{R}, z\in \mathscr{V}\setminus \mathscr{R}} [f_{zu} - f_{uz}].
	\end{equation}
\end{defn}
\begin{exmp}
	\label{exmp:3}
	Let $\mathscr{V}= \{u, v, w, z_{1}, z_{2}, z_{3}\}$ be the node set of a network. Assume a set of flows as $f_{u,z_{1}} = 1$, $f_{z_{2},v} = 2$, and $f_{w,z_{3}} = 3$. Then, the TD of the route $\Omega$ with nodes ${u,v,w}$  $\nabla_{\Omega}= 2 - (1+1) = 0$.
\end{exmp}
Similar to Proposition \ref{prop:ltc}, the result below demonstrates how the TD of a route can be expressed in terms of the TD of its constituting nodes.
\z{\begin{prop}[Node-Route TD Correspondence]
		\label{prop:rrtd}
		For a route $\Omega$, its TD can be factorized as 
		\begin{equation}
			\nabla_{\Omega} = \sum_{u\in\Omega}\nabla_{u}.
		\end{equation}
\end{prop}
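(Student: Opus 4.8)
The plan is to work from the right-hand side, expanding each node TD via Definition~\ref{defn:d} and reorganizing the resulting double sum. First I would observe that since $f_{zu}$ and $f_{uz}$ vanish whenever $z \notin \mathscr{N}_{u}$ (no link, hence no flow), the node TD admits the equivalent form
\begin{equation*}
\nabla_{u} = \sum_{z \in \mathscr{V}\setminus\{u\}} [f_{zu} - f_{uz}],
\end{equation*}
where the summation now ranges over the entire vertex set rather than only the neighbor set. This reformulation is the key enabler, since it places every constituent node of $\Omega$ on a common index set and makes the subsequent bookkeeping uniform.

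Next, writing $\mathscr{R}$ for the node set of $\Omega$ (so $u \in \Omega$ means $u \in \mathscr{R}$) and summing the display above over $u \in \mathscr{R}$, I would split the inner index set $\mathscr{V}\setminus\{u\}$ into the nodes lying inside the route, $\mathscr{R}\setminus\{u\}$, and those lying outside, $\mathscr{V}\setminus\mathscr{R}$. This yields
\begin{equation*}
\sum_{u\in\mathscr{R}} \nabla_{u} = \sum_{u\in\mathscr{R}}\sum_{z\in\mathscr{R}\setminus\{u\}} [f_{zu} - f_{uz}] + \sum_{u\in\mathscr{R}}\sum_{z\in\mathscr{V}\setminus\mathscr{R}} [f_{zu} - f_{uz}].
\end{equation*}
The second term is exactly $\nabla_{\Omega}$ by the definition of route TD, so the entire claim reduces to showing that the first (internal) double sum vanishes.

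The crux, and the one genuinely non-routine step, is the cancellation of this internal term. I would argue that the sum ranges over all \emph{ordered} pairs $(u,z)$ of distinct nodes of $\mathscr{R}$, and that the summand $f_{zu} - f_{uz}$ is antisymmetric under the interchange $u \leftrightarrow z$. Pairing each ordered pair with its transpose, the contributions $f_{zu} - f_{uz}$ and $f_{uz} - f_{zu}$ cancel, so the internal sum is zero. Substituting this back gives $\sum_{u\in\mathscr{R}}\nabla_{u} = \nabla_{\Omega}$, as required. I expect the main subtlety to lie in phrasing the antisymmetric-cancellation argument cleanly, making explicit that the double sum is over ordered pairs so that each internal flow is counted once with each sign, rather than in any computational difficulty.
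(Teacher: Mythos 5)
Your proof is correct, but it takes a genuinely different route from the paper. The paper proceeds by induction on the length of $\Omega$: the base case is the link case (Proposition~\ref{prop:ltc}), the three-node case is handled by redoing the sink/source decomposition of that proof for $\nabla_{u,w}$ and $\nabla_{w,v}$, and the general case is left to induction. You instead give a direct computation: extend each node TD to a sum over all of $\mathscr{V}\setminus\{u\}$ (using the implicit convention, already present in the route-TD definition itself, that flows between non-adjacent nodes vanish), split the resulting double sum into internal and external pairs, and kill the internal part by the antisymmetry of $f_{zu}-f_{uz}$ under $u\leftrightarrow z$. Your argument is more self-contained and arguably stronger: it never uses the fact that $\Omega$ is a path (connectedness, distinctness of links, ordering are all irrelevant), so it proves the identity for an arbitrary node subset $\mathscr{R}\subseteq\mathscr{V}$, and it avoids both the induction and the dependence on Proposition~\ref{prop:ltc}. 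What the paper's approach buys is economy of exposition --- it recycles the already-established link case rather than reopening the flow bookkeeping --- but at the cost of leaving the inductive step essentially unwritten. The one point you should make explicit is the zero-extension convention for $f_{uv}$ when $v\notin\mathscr{N}_u$, since the paper defines flows only between neighbors; once that is stated, your cancellation argument is airtight.
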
}
\begin{proof}
	See, Appendix \ref{app:rrtd}.
\end{proof}
Following this proposition, the TD of a route can be recursively computed throughout its expansion via adding further nodes given the TD of some initial nodes as revealed below.
\z{\begin{prop}[Node-Link-Route TD Correspondence]
		\label{prop:nlrtd}
		Suppose that $\Omega$ is a route between node $u$ and node $v$ and the associated node set is $\{u,w,v\}$. Then, we have 
		\begin{equation}	
			\nabla_{\Omega} = \nabla_{u,w} + \nabla_{v}=\nabla_{w,v} + \nabla_{u}.
		\end{equation}
\end{prop}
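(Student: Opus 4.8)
The plan is to obtain both equalities as a direct consequence of the two preceding correspondence results, so that the whole argument amounts to regrouping a three-term sum. First, I would apply the Node-Route TD Correspondence (Proposition~\ref{prop:rrtd}) to the route $\Omega$, whose associated node set is $\{u,w,v\}$. This immediately decomposes the route divergence into its three constituent node divergences, giving $\nabla_{\Omega} = \nabla_{u} + \nabla_{w} + \nabla_{v}$.

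Next, I would invoke the Node-Link TD Correspondence (Proposition~\ref{prop:ltc}) on each of the two links that constitute the route. Because $\Omega$ is a path traversing $u$, $w$, and $v$ in sequence, the links joining $u$ to $w$ and $w$ to $v$ are both present, so the proposition applies to each and yields $\nabla_{u,w} = \nabla_{u} + \nabla_{w}$ and $\nabla_{w,v} = \nabla_{w} + \nabla_{v}$.

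It then remains only to regroup the three-term expansion of $\nabla_{\Omega}$ in the two possible ways. Associating the first two summands and substituting the first link identity gives $\nabla_{\Omega} = (\nabla_{u} + \nabla_{w}) + \nabla_{v} = \nabla_{u,w} + \nabla_{v}$, while associating the last two summands and substituting the second link identity gives $\nabla_{\Omega} = \nabla_{u} + (\nabla_{w} + \nabla_{v}) = \nabla_{w,v} + \nabla_{u}$. Reading these two equalities together establishes the claimed chain.

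I expect no genuine obstacle here, since the statement is a short corollary of Propositions~\ref{prop:ltc} and \ref{prop:rrtd}. The only point meriting a word of care is confirming that the two links invoked actually exist in the network, so that the link-TD proposition is legitimately applicable. This is guaranteed by the definition of a route as a path whose nodes and links are all distinct, which forces $w \in \mathscr{N}_{u}$ and $v \in \mathscr{N}_{w}$ and hence makes $\nabla_{u,w}$ and $\nabla_{w,v}$ well defined.
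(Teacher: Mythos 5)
Your argument is correct and is essentially identical to the paper's own proof: both apply Proposition~\ref{prop:rrtd} to write $\nabla_{\Omega}=\nabla_{u}+\nabla_{w}+\nabla_{v}$ and then regroup via Proposition~\ref{prop:ltc} to obtain the two claimed factorizations. Your closing remark verifying that the links $\{u,w\}$ and $\{w,v\}$ actually exist (so the link TDs are well defined) is a small but welcome addition the paper leaves implicit.
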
}
\begin{proof}
	See, Appendix \ref{app:nlrtd}.
\end{proof}
\subsection{Spatial TD Dynamics}
A natural question in network traffic analysis is how the traffic at one specific node affects that of other nodes. Suppose flows $f_{u,z_{1}} = \cos t$, $f_{u,z_{2}} = \cos t$, $f_{z_{1},u} = \sin t$, and $f_{z_{2},u} = \sin t$. Then, it is easy to obtain that $\nabla_{u}=-2\nabla_{z_1}=-2\nabla_{z_2}$, which shows the spatial relationship among different node TD. Loosely speaking, if $\nabla_{z_1}$ or $\nabla_{z_2}$ changes one unit, then $\nabla_{u}$ will change two. Inspired by this, we introduce a new operation to characterize the relative spatial change of TD below.{\begin{defn}[Spatial Node-Node TD Derivative]
		\label{defn:STDD}
		Given a network with node set $\mathscr{V}$, suppose that $u,v \in \mathscr{V}$, and $
		\nabla_{u}=h\left(\nabla_{v}\right)$, where $h$ is a many-to-one mapping. Then, the spatial TD derivative of node $u$ with node $v$ is defined as $
		\frac{\partial \nabla_{u}}{{\partial \nabla_{v}}} \defeq h^{\prime}$,	where $h^{\prime}$ represents the derivative of $h$ with respect to its argument. Furthermore, $\frac{\partial^{n} \nabla_{u}}{{\partial \nabla_{u}}^{n}} \defeq h^{\left(n\right)}$, where $n \in \mathbb{Z}^{+}$ and $h^{\left(n\right)}$ represents the $n$th derivative of $h$ with respect to its argument.
	\end{defn}
	As link TD can be represented by the involved node TD, Spatial Link-Node TD Derivative, $\frac{\partial \nabla_{u,v}}{{\partial \nabla_{u}}}$ or $\frac{\partial \nabla_{u,v}}{{\partial \nabla_{v}}}$, can be defined in a similar way, thus omitted here for brevity. If $h$ in the above definition is not bijective, then $\frac{\partial \nabla_{v}}{{\partial \nabla_{u}}}$ may not be well-defined. For this singular case, we use $\left(\frac{\partial \nabla_{u}}{{\partial \nabla_{v}}}\right)^{-1}$ to calculate $\frac{\partial \nabla_{v}}{{\partial \nabla_{u}}}$. The theorem below asserts how the spatial node-node TD derivatives of two adjacent nodes relates to the related spatial link-node TD derivative.
	\z{	\begin{thm}[Spatial TD Dynamics]
			\label{thm:dtd}
			Let $u\in \mathscr{V}$ and $v \in \mathscr{N}_{u}$. Then, given any $n \in \mathbb{Z}^{+}$, we have
			\begin{equation}
				\label{eq:xxxx}
				\frac{\partial^{n}\nabla_{u,v}}{{\partial \nabla_{u}}^{n}} - \frac{\partial^{n}\nabla_{u,v}}{{\partial \nabla_{v}}^{n}} = \frac{\partial^{n} \nabla_{v}}{{\partial \nabla_{u}}^{n}} - \frac{\partial^{n} \nabla_{u}}{{\partial \nabla_{v}}^{n}}.
			\end{equation}
	\end{thm}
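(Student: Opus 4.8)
The plan is to reduce the entire statement to the additive decomposition of link TD furnished by Proposition \ref{prop:ltc}, namely $\nabla_{u,v} = \nabla_{u} + \nabla_{v}$, and then to exploit the linearity of the differentiation operator together with a symmetry between the two ``self-derivative'' terms that arises on the two sides of the identity. Since $v \in \mathscr{N}_{u}$, the hypothesis of Proposition \ref{prop:ltc} is met, so this decomposition is available for free.

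First I would fix $n \in \mathbb{Z}^{+}$ and, following Definition \ref{defn:STDD}, regard $\nabla_{u}$ and $\nabla_{v}$ as functionally dependent through the many-to-one mapping $h$ (and its local inverse), so that each of the four spatial derivatives in the statement is a genuine $n$th derivative of a single-variable function and all of them exist. Applying $\tfrac{\partial^{n}}{{\partial \nabla_{u}}^{n}}$ to $\nabla_{u,v} = \nabla_{u} + \nabla_{v}$ and using linearity gives
\[
\frac{\partial^{n}\nabla_{u,v}}{{\partial \nabla_{u}}^{n}} = \frac{\partial^{n}\nabla_{u}}{{\partial \nabla_{u}}^{n}} + \frac{\partial^{n}\nabla_{v}}{{\partial \nabla_{u}}^{n}},
\]
and symmetrically, applying $\tfrac{\partial^{n}}{{\partial \nabla_{v}}^{n}}$ yields
\[
\frac{\partial^{n}\nabla_{u,v}}{{\partial \nabla_{v}}^{n}} = \frac{\partial^{n}\nabla_{u}}{{\partial \nabla_{v}}^{n}} + \frac{\partial^{n}\nabla_{v}}{{\partial \nabla_{v}}^{n}}.
\]

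Next I would subtract the second display from the first and regroup the four terms into the two self-derivative terms $\tfrac{\partial^{n}\nabla_{u}}{{\partial \nabla_{u}}^{n}}$ and $\tfrac{\partial^{n}\nabla_{v}}{{\partial \nabla_{v}}^{n}}$ on one hand, and the two cross terms $\tfrac{\partial^{n}\nabla_{v}}{{\partial \nabla_{u}}^{n}}$ and $\tfrac{\partial^{n}\nabla_{u}}{{\partial \nabla_{v}}^{n}}$ on the other. The crucial observation is that each self-derivative term is the $n$th derivative of a quantity with respect to itself, i.e.\ the $n$th derivative of the identity map; hence $\tfrac{\partial^{n}\nabla_{u}}{{\partial \nabla_{u}}^{n}} = \tfrac{\partial^{n}\nabla_{v}}{{\partial \nabla_{v}}^{n}}$ (both equal $1$ when $n=1$ and $0$ when $n \geq 2$). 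These two contributions therefore cancel upon subtraction, leaving exactly $\tfrac{\partial^{n}\nabla_{v}}{{\partial \nabla_{u}}^{n}} - \tfrac{\partial^{n}\nabla_{u}}{{\partial \nabla_{v}}^{n}}$, which is the right-hand side of \eqref{eq:xxxx}.

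I expect the only delicate point to be not the algebra but the bookkeeping of what each symbol $\tfrac{\partial^{n}}{{\partial \nabla_{u}}^{n}}$ actually differentiates: one must be consistent in treating $\nabla_{v}$ as a function of $\nabla_{u}$ (and vice versa) so that the cross terms remain intact while the self-derivative terms collapse to derivatives of the identity. Once that convention is pinned down, the cancellation of the self-derivative terms is immediate and handles all orders $n$ uniformly, so no separate treatment of the $n=1$ case versus $n \geq 2$ is needed.
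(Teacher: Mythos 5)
Your proposal is correct and follows essentially the same route as the paper: decompose $\nabla_{u,v}=\nabla_{u}+\nabla_{v}$ via Proposition \ref{prop:ltc}, differentiate $n$ times using linearity, and subtract so that the self-derivative terms cancel. If anything, you are slightly more careful than the paper, which writes $\frac{\partial^{n}\nabla_{u}}{{\partial\nabla_{u}}^{n}}=1$ for all $n$ rather than $0$ for $n\ge 2$, though this does not affect the conclusion since those terms cancel in the subtraction either way.
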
}
	\begin{proof}
		See, Appendix \ref{app:dtd}.
	\end{proof}
	\begin{exmp}
		Assume a link including two adjacent nodes $u$ and $v$ such that the traffic divergence of one of them is a spatio-temporal function of that of another, say,
		$
		\nabla_{u} = [m{\nabla_{v}}+n]g(t),
		$ where $m, n \in \mathbb{R}$, and $g(t)$ is a real-valued function in time. From this it follows that $\nabla_{v}=(\frac{\nabla_{u}/g(t) - n}{m})$.
		Then, $\nabla_{u,v}$ ca be expressed as a function of $\nabla_{u}$ or $\nabla_{v}$ as $
		\nabla_{u,v} = \nabla_{u} + (\frac{\nabla_{u}/g(t) - n}{m})$ or $
		\nabla_{u,v} = \nabla_{v} + [m{\nabla_{v}}+n]g(t)
		$. Simple calculations show that equation \ref{eq:xxxx} holds.
	\end{exmp}
	We end this subsection by introducing a concept which will be used later for traffic distribution analysis.
	\begin{defn}[Spatial TD Rate]
		Given a node $u \in \mathscr{V}$, the spatial TD rate of $u$ is defined as 
		\begin{equation}
			\square_{u} \defeq \sum_{z \in \mathscr{N}_{u}}\frac{\partial\nabla_{u}}{\partial\nabla_{z}}.
		\end{equation}
	\end{defn}
	Intuitively, the spatial TD rate of a node characterizes the coupling between the node and its neighbors. If $\square_{u}=0$, then overall the traffic in the neighbor nodes has little impact on that of $u$
	\subsection{Temporal TD Dynamics}
	Traffic divergences usually vary in the course of time, as well, because of variable flows of communication between different pairs of nodes. For the characterization of such a temporal traffic dynamism, we define the notion of temporal traffic divergence rate as below.
	\begin{defn}[Temporal Traffic Divergence Rate]
		Given a node $u \in \mathscr{V}$, the temporal traffic divergence of $u$ is defined as 
		\begin{equation}
			\boxplus_{u} \defeq \sum_{z \in \mathscr{N}_{u}}[\frac{\partial}{\partial t}\nabla_{z}].
		\end{equation}
	\end{defn}
	\begin{exmp}
		\label{examp:pp}
		Assume a node $u$ whose corresponding neighbor set is $\mathscr{N}_{u}$. Let the traffic divergence of neighbor $z\in \mathscr{N}_u$ be an accumulative function of the traffic divergences of its neighbors, say, $
		\nabla_{z} \defeq \sum_{y \in \mathscr{N}_{z}}[m{\nabla_{y}}+n]g(t)
		$, where $m$ and $n$  are real numbers. So, we have $
		\boxplus_{u} = \frac{\mathrm{d} g(t)}{\mathrm{d} t}\sum_{z \in \mathscr{N}_{u},y \in \mathscr{N}_{z}}[m{\nabla_{y}}+n]
		$.
	\end{exmp}
	The result below shows how the temporal traffic rate of a node is bounded by multiplicative functions of its spatial and temporal traffic divergence rates and the cardinality of its neighbor node set.
	\z{	\begin{thm}[Node Temporal Traffic Rate]
			\label{thm:nttr}
			Given a node $u \in \mathscr{V}$ the cardinality of whose neighbor set $\mathscr{N}_{u}$ is $n\defeq \lvert\mathscr{N}_{u}\rvert$, the temporal traffic rate of that node is bounded as 
			\begin{equation}
				-n^{-1}[\square_{u}\boxplus_{u}]\le \frac{\partial \nabla_{u}}{\partial t} \le n^{-1}[\square_{u}\boxplus_{u}].
			\end{equation}
	\end{thm}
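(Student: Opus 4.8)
The plan is to turn the statement into an elementary inequality once the single-variable structure encoded in the definition of the spatial node-node TD derivative is exploited. For each neighbor $z\in\mathscr{N}_u$ abbreviate $d_z\defeq\frac{\partial\nabla_u}{\partial\nabla_z}$ and $\tau_z\defeq\frac{\partial}{\partial t}\nabla_z$, so that $\square_u=\sum_{z\in\mathscr{N}_u}d_z$ and $\boxplus_u=\sum_{z\in\mathscr{N}_u}\tau_z$ directly from their definitions. The whole argument rests on first showing that the product $d_z\tau_z$ is in fact independent of $z$.

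First I would establish this constancy. By the defining equation of the spatial TD derivative, for each neighbor $z$ the quantity $\nabla_u$ is regarded as a function $\nabla_u=h_z(\nabla_z)$ of the single argument $\nabla_z$, with $d_z=h_z'$. Treating $\nabla_z$ as a function of time and differentiating $\nabla_u(t)=h_z(\nabla_z(t))$ through the ordinary one-variable chain rule gives
\begin{equation}
	\frac{\partial\nabla_u}{\partial t}=h_z'(\nabla_z)\,\frac{\partial\nabla_z}{\partial t}=d_z\,\tau_z
\end{equation}
for every $z\in\mathscr{N}_u$. Because the left-hand side is one and the same temporal rate no matter which neighbor is singled out, all the products $d_z\tau_z$ must equal the common value $D\defeq\frac{\partial\nabla_u}{\partial t}$, and summing over the $n$ neighbors yields $\sum_{z\in\mathscr{N}_u}d_z\tau_z=nD$.

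The second step is the bound itself. Under the nonnegativity of the flows assumed in the network model (hence $d_z,\tau_z\ge0$ and $D\ge0$), every off-diagonal term of the product $\square_u\boxplus_u=\bigl(\sum_z d_z\bigr)\bigl(\sum_w\tau_w\bigr)=\sum_{z,w}d_z\tau_w$ is nonnegative, so discarding the terms with $z\ne w$ leaves $\square_u\boxplus_u\ge\sum_z d_z\tau_z=nD$. Dividing by $n$ gives the upper bound $\frac{\partial\nabla_u}{\partial t}=D\le n^{-1}\square_u\boxplus_u$, while the lower bound is immediate since $\square_u\boxplus_u\ge0$ forces $-n^{-1}\square_u\boxplus_u\le0\le D$.

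The step I expect to be the real obstacle is the rigorous justification of the constancy in the first step: it depends on reading the spatial derivative strictly in the single-variable sense of its definition, so that the same $\frac{\partial\nabla_u}{\partial t}$ is reconstructed from each neighbor and the products $d_z\tau_z$ are pinned to one value. A naive multivariate chain rule $\frac{\partial\nabla_u}{\partial t}=\sum_z d_z\tau_z$ would instead place the estimate on the wrong side of Chebyshev's sum inequality and fail to reproduce the claimed constant. I would also record that the nonnegativity hypothesis is exactly what makes the two-sided form meaningful (it guarantees $\square_u\boxplus_u\ge0$), and that a sharper $n^{-2}$ factor is in fact available from Cauchy–Schwarz applied to $(\sqrt{d_z})_z$ and $(\sqrt{\tau_z})_z$, with the stated $n^{-1}$ bound following a fortiori for $n\ge1$.
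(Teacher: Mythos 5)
Your proof takes a genuinely different route from the paper's, and both of its pillars have problems. The paper's own argument starts from the \emph{multivariate} chain rule $\frac{\partial \nabla_{u}}{\partial t} = \sum_{z \in \mathscr{N}_{u}} \frac{\partial \nabla_{u}}{\partial \nabla_{z}}\frac{\partial \nabla_{z}}{\partial t}$, applies the Cauchy--Schwarz inequality to that sum, and then invokes the identity $\sum_{z}\bigl[\frac{\partial \nabla_{u}}{\partial \nabla_{z}}\bigr]^{2} = n^{-1}\square_{u}^{2}$ (and its temporal analogue) to extract the factor $n^{-1}$. You explicitly reject that chain rule in favor of $n$ separate single-variable identities $\frac{\partial\nabla_u}{\partial t}=d_z\tau_z$, one per neighbor. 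That is not merely a different decomposition --- it is inconsistent with the paper's framework: if both readings held simultaneously, you would have $D=d_z\tau_z$ for every $z$ \emph{and} $D=\sum_z d_z\tau_z=nD$, which forces $D=0$ whenever $n\ge 2$. Your constancy claim is therefore not a refinement of the paper's setup but a replacement of it, and nothing in the definitions guarantees that the $n$ functional relations $\nabla_u=h_z(\nabla_z)$ all hold identically in time so that each one can be totally differentiated.

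The second and more decisive gap is the nonnegativity hypothesis. The network model only assumes the flows $f_{uv}$ are positive; the divergences and their mutual derivatives carry signs. Indeed, the paper's own motivating example just before the definition of the spatial derivative has $\nabla_u=-2\nabla_{z_1}$, i.e.\ $d_{z_1}=-2<0$, and Example 1 produces a negative $\nabla_u$ whose time derivative changes sign. Once signs are allowed, the step $\square_u\boxplus_u=\sum_{z,w}d_z\tau_w\ge\sum_z d_z\tau_z$ fails (the discarded cross terms need not be nonnegative, and the product $\square_u\boxplus_u$ itself can be negative, making the two-sided bound vacuous or false under your reading), and the lower bound, which you derive solely from $D\ge 0$, has no support at all. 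To repair the argument you would need to return to the paper's starting identity \eqref{eq:main} and control $\bigl(\sum_z d_z\tau_z\bigr)^2$ by Cauchy--Schwarz, which is sign-agnostic, rather than by termwise comparison.
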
}
	\begin{proof}
		See, Appendix \ref{app:nttr}.
	\end{proof}
	The computation of node temporal traffic rate is extremely desired in many traffic-driven analyses, such as anomaly detection \cite{gu2005detecting}, phase transition investigations \cite{cheng2020phase}, self-similarity monitoring in cyber-attacked networks \cite{kotenko2021technique}, etc. However, the exact computation of this quantity in long operations of many networks is often practically infeasible. Alternatively, temporal and spatial traffic divergence rates of a node can be readily computed. Thus, one can employ the result of the theorem above to find a bounded profile for the node temporal traffic rate of any desired node in a network.
	\section{Maximal Traffic Distribution}
	\label{sec:max}
	A natural question in network analysis is to quantify the traffic distribution over the network. In terms of the spatial TD rate defined previously, we propose a new quantity to characterize the relative difference of traffic among different nodes.
	\begin{defn}[Maximal Traffic Distribution]
		The traffic associated with a network including nodes $\mathscr{V}$ is called maximally-distributed if the following condition is fulfilled.
		\begin{equation}
			\label{eq:mtdc}
			\Delta_{u,v}\defeq \frac{\square_{u}}{\square_{v}} = 1 \quad (\forall u,v \in \mathscr{V}).
		\end{equation}
	\end{defn}
	An alternative definition would consider the ratio of absolute values of traffic divergences to define the maximum traffic distribution condition. However, the ratio of rates of traffic divergences inherently better incorporate dynamics of traffic divergences compared to the ratio of absolute values. In this regard, the ratio of rates is directly correlated with the desired equilibrium state of traffic. On the other hand, any static metric, such as the ratio of absolute values, essentially requires some excessive conditions on the rate of traffic divergences, as well. The following proposition gives an equivalent characterization of maximal traffic distribution, which may be more computationally efficient.
	\z{\begin{prop}[A Equivalent Condition]
			\label{prop:mtdc}
			The maximal traffic distribution condition in (\ref{eq:mtdc}) is equivalent to
			\begin{equation*}
				\frac{\displaystyle\sum\limits_{z \in \mathscr{N}_u}\Bigg[\frac{\partial\nabla_{z}}{\partial\nabla_{u}} + \Big[\frac{\partial}{\partial\nabla_{u}} - \frac{\partial}{\partial\nabla_{z}}\Big]\nabla_{u,z}\Bigg]}{\displaystyle\sum\limits_{z \in \mathscr{N}_v}\Bigg[\frac{\partial\nabla_{z}}{\partial\nabla_{v}} + \Big[\frac{\partial}{\partial\nabla_{v}} - \frac{\partial}{\partial\nabla_{z}}\Big]\nabla_{v,z}\Bigg]} = 1\quad (\forall u,v \in \mathscr{V})
			\end{equation*}
	\end{prop}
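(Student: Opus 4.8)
The plan is to reduce the claimed equivalence to two purely algebraic identities: that the numerator of the displayed fraction equals $\square_{u}$ and its denominator equals $\square_{v}$. Once these are established, the proposition is immediate, because the fraction is then literally the ratio $\square_{u}/\square_{v}=\Delta_{u,v}$, so requiring it to equal $1$ for every pair $u,v$ is exactly the maximal traffic distribution condition. Thus the whole task is to rewrite the summand of $\square_{u}=\sum_{z\in\mathscr{N}_{u}}\frac{\partial\nabla_{u}}{\partial\nabla_{z}}$ into the bracketed expression appearing inside the sum over $\mathscr{N}_{u}$.

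The key step is to invoke Theorem \ref{thm:dtd} at $n=1$ on the link joining $u$ and a neighbor $z\in\mathscr{N}_{u}$, which yields $\frac{\partial\nabla_{u,z}}{\partial\nabla_{u}}-\frac{\partial\nabla_{u,z}}{\partial\nabla_{z}}=\frac{\partial\nabla_{z}}{\partial\nabla_{u}}-\frac{\partial\nabla_{u}}{\partial\nabla_{z}}$. Solving this single relation for the term $\frac{\partial\nabla_{u}}{\partial\nabla_{z}}$ isolates it as $\frac{\partial\nabla_{z}}{\partial\nabla_{u}}$ adjusted by the link-derivative difference $\big[\frac{\partial}{\partial\nabla_{u}}-\frac{\partial}{\partial\nabla_{z}}\big]\nabla_{u,z}$, which is precisely the quantity summed in the numerator. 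I would then sum over $z\in\mathscr{N}_{u}$ to recover the numerator as $\square_{u}$, and repeat the identical manipulation with $v$ in place of $u$ to recover the denominator as $\square_{v}$. Substituting both collapses the fraction to $\square_{u}/\square_{v}$ and closes the argument for all $u,v$.

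The main obstacle I anticipate is sign and well-definedness bookkeeping rather than any conceptual difficulty. First, when the $n=1$ relation is solved for $\frac{\partial\nabla_{u}}{\partial\nabla_{z}}$, the link-derivative bracket is naturally \emph{subtracted} from $\frac{\partial\nabla_{z}}{\partial\nabla_{u}}$, so I would track that sign carefully and reconcile it with the form printed in the statement. Second, the mixed derivatives $\frac{\partial\nabla_{z}}{\partial\nabla_{u}}$ and $\frac{\partial\nabla_{u}}{\partial\nabla_{z}}$ are meaningful only through the many-to-one functional dependence of Definition \ref{defn:STDD}, so in the non-bijective (singular) situation I would appeal to the stated convention $\frac{\partial\nabla_{z}}{\partial\nabla_{u}}=\big(\frac{\partial\nabla_{u}}{\partial\nabla_{z}}\big)^{-1}$ to keep every term defined.

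Finally, I would record the implicit nondegeneracy needed to pass from "$\square_{u}/\square_{v}=1$" to "numerator over denominator $=1$": the common denominator $\square_{v}$ (equivalently the denominator sum) must be nonzero for the ratio to be meaningful. Under this standing assumption the two conditions are term-by-term identical, and the equivalence holds for every pair $u,v\in\mathscr{V}$, which is what the proposition asserts.
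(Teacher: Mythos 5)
Your approach is exactly the paper's: apply Theorem \ref{thm:dtd} at $n=1$ to each link $(u,z)$, solve for $\frac{\partial\nabla_{u}}{\partial\nabla_{z}}$, sum over $z\in\mathscr{N}_{u}$ to recover $\square_{u}$ (and likewise $\square_{v}$ over $\mathscr{N}_{v}$), so the displayed fraction collapses to $\Delta_{u,v}=\square_{u}/\square_{v}$. Your sign worry is well founded and you should trust your own algebra: solving $\frac{\partial\nabla_{u,z}}{\partial\nabla_{u}}-\frac{\partial\nabla_{u,z}}{\partial\nabla_{z}}=\frac{\partial\nabla_{z}}{\partial\nabla_{u}}-\frac{\partial\nabla_{u}}{\partial\nabla_{z}}$ gives $\frac{\partial\nabla_{u}}{\partial\nabla_{z}}=\frac{\partial\nabla_{z}}{\partial\nabla_{u}}-\bigl[\frac{\partial}{\partial\nabla_{u}}-\frac{\partial}{\partial\nabla_{z}}\bigr]\nabla_{u,z}$ with a \emph{minus} sign, whereas the paper's appendix performs the same rearrangement with a plus sign and the proposition statement inherits that slip, so your corrected bracket is the one whose neighbor-sum actually equals $\square_{u}$.
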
}
	\begin{proof}
		See, Appendix \ref{app:mtdc}.
	\end{proof}
	The assessment of (\ref{eq:mtdc}) is computationally resource-intensive particularly when one deals with large networks. This condition seeks a direct but naive approach to globally investigating the maximality of traffic distribution in a network. However, it is beneficial to demonstrate that the overall localized assessments of the condition in each node's neighborhood condition are indeed equivalent to the cited global assessment, i.e., 
	\begin{equation}
		\begin{split}
			&[\Delta_{u,v} = 1 (\forall u \in \mathscr{V})(\forall v \in \mathscr{N}_{u})] \equiv\\ &[\Delta_{u,v} = 1 (\forall u,v \in \mathscr{V})]
		\end{split}
	\end{equation}.
	\z{\begin{lem}
			\label{lem:symm}
			Suppose nodes $u,v,w \in \mathscr{V}$ such that $v$ is between $u$ and $w$. Then, given a $\gamma \in \mathbb{R}^{+}$, assume that the relative traffic rate in both sides of $v$ are the same, say, $\Delta_{u,v} = \Delta_{w,v} \defeq \gamma$. Then, the overall end-to-end relative traffic rate between $u$ and $w$ is maximally distributed.
	\end{lem}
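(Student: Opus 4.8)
The plan is to collapse everything onto the single scalar quantity $\square$ and to exploit the purely multiplicative structure of the relative traffic rate $\Delta$. First I would unfold the definition $\Delta_{a,b} \defeq \square_{a}/\square_{b}$ for the two ratios supplied by the hypothesis. Since $v$ is the common node sitting between $u$ and $w$, the assumption $\Delta_{u,v} = \Delta_{w,v} = \gamma$ translates into the two scalar identities $\square_{u} = \gamma\,\square_{v}$ and $\square_{w} = \gamma\,\square_{v}$, where each $\square$ is read off from its defining sum $\square_{a} = \sum_{z \in \mathscr{N}_{a}} \tfrac{\partial \nabla_{a}}{\partial \nabla_{z}}$. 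At this stage the route/neighborhood geometry has done its job and the claim has become a statement about three real numbers hinged through the common factor $\square_{v}$.

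The central step is then immediate: eliminating the shared factor $\square_{v}$ between the two identities yields $\square_{u} = \square_{w}$. Forming the end-to-end rate gives $\Delta_{u,w} = \square_{u}/\square_{w} = 1$, which is precisely the maximal traffic distribution condition (\ref{eq:mtdc}) specialized to the pair $(u,w)$. Hence the overall relative traffic rate between the two endpoints is maximally distributed, as required. The argument is essentially a transitivity of equal ratios through the hinge node $v$, so I would keep it to a couple of lines once the definitions are substituted.

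The one genuine care point, and the only thing I would call an obstacle, is the well-definedness of the quotients: if $\square_{v} = 0$ then $\Delta_{u,v}$ degenerates to an indeterminate $0/0$ and the cancellation of $\square_{v}$ is not licensed. I would therefore record at the outset that $\gamma \in \mathbb{R}^{+}$ together with $\Delta_{u,v} = \gamma$ forces $\square_{v} \neq 0$, so that dividing through is legitimate; in the borderline (non-bijective) case one may instead appeal to the singular-case convention introduced in Definition \ref{defn:STDD} to interpret the derivatives. Beyond this bookkeeping, no further structure of the network, the flows $f_{uv}$, or the underlying route $\Omega$ is needed, which is why the result holds under such mild hypotheses.
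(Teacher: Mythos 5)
Your proof is correct and follows essentially the same route as the paper's: the paper's own argument is the one-line observation that $\Delta_{u,v}=\Delta_{w,v}=\gamma$ immediately forces $\Delta_{u,w}=\Delta_{w,u}=1$, i.e.\ exactly the ratio-transitivity through the hinge node $v$ that you spell out via $\square_{u}=\gamma\,\square_{v}=\square_{w}$. Your additional remark that $\gamma\in\mathbb{R}^{+}$ guarantees $\square_{v}\neq 0$ so the cancellation is licensed is a welcome piece of bookkeeping that the paper's terse proof silently omits.
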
}
	\begin{proof}
		See, Appendix \ref{app:symm}.
	\end{proof}
	The result above implies that the traffic rate symmetrically distributed around the neighboring nodes of a particular node represents the maximal traffic distribution condition in the route connecting those nodes together serially.
	\z{\begin{thm}[Localization]
			\label{thm:loc}
			Given a network comprising some nodes belonging to $\mathscr{V}$, the maximal traffic distribution condition (\ref{eq:mtdc}) may be relaxed to the following equivalent localized version. 
			\begin{equation}
				\Delta_{u,v} = 1 \quad(\forall u \in \mathscr{V})(\forall v \in \mathscr{N}_{u})
			\end{equation}
	\end{thm}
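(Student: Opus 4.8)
The plan is to prove the asserted relaxation as a two-way implication between the global condition~(\ref{eq:mtdc}) and its localized form. One direction is immediate: the set of neighbouring pairs $\{(u,v):u\in\mathscr{V},\,v\in\mathscr{N}_u\}$ is contained in the set of all node pairs $\{(u,v):u,v\in\mathscr{V}\}$, so if $\Delta_{u,v}=1$ holds for every pair of nodes it holds in particular for every adjacent pair. All the content therefore lies in the converse.

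For the converse I would assume the localized condition, namely $\Delta_{u,v}=\square_{u}/\square_{v}=1$ for every $u\in\mathscr{V}$ and $v\in\mathscr{N}_{u}$; equivalently, $\square_{u}=\square_{v}$ whenever $u$ and $v$ are adjacent. Fix an arbitrary, not necessarily adjacent, pair $p,q\in\mathscr{V}$, and aim to show $\Delta_{p,q}=1$. Because the network is connected and every route has finite length (Section~\ref{sec:mach}), there exists a route $p=w_{0},w_{1},\dots,w_{k}=q$ whose consecutive nodes $w_{i},w_{i+1}$ are adjacent.

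The key step is to propagate the local equality along this route. At each interior node $w_{i}$ its two route-neighbours satisfy $\Delta_{w_{i-1},w_{i}}=\Delta_{w_{i+1},w_{i}}=1$, so Lemma~\ref{lem:symm} with $\gamma=1$ yields $\Delta_{w_{i-1},w_{i+1}}=1$, i.e.\ $\square_{w_{i-1}}=\square_{w_{i+1}}$. Chaining these equalities (equivalently, a direct induction on route length using $\square_{w_{i}}=\square_{w_{i+1}}$) gives $\square_{w_{0}}=\square_{w_{1}}=\dots=\square_{w_{k}}$. Hence $\square_{p}=\square_{q}$, so $\Delta_{p,q}=1$, and since $p,q$ were arbitrary the global condition~(\ref{eq:mtdc}) follows.

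The main obstacle I anticipate is not the algebra but the bookkeeping of the chaining: one must verify that the betweenness hypothesis of Lemma~\ref{lem:symm} is met at each interior node and that the induction over route length is well founded, which is precisely where the assumed finiteness of routes enters. A subtler point is that the equivalence tacitly relies on the network being connected — otherwise the localized condition pins down $\square_{u}$ only within each connected component and imposes nothing across components, so the global condition could fail even when the local one holds. I would therefore either carry connectivity as a standing assumption of the network model or state the theorem explicitly for connected networks; I would also note that the ratios $\Delta_{u,v}$ are taken to be well-defined, i.e.\ the relevant $\square_{v}$ are nonzero.
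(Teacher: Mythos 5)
Your proof takes essentially the same route as the paper's: the paper likewise invokes Lemma \ref{lem:symm} with $\gamma=1$ and chains the resulting equalities of spatial TD rates along a route to reach arbitrary node pairs, though it compresses this into a single sentence rather than spelling out the induction. Your added observations --- the trivial forward inclusion, the tacit connectivity assumption, and the requirement that the $\square_{v}$ be nonzero so the ratios $\Delta_{u,v}$ are well-defined --- are details the paper leaves implicit but do not change the argument.
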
}
	\begin{proof}
		See, Appendix \ref{app:loc}.
	\end{proof}
	\begin{rem}
		The advantage of the localized version above is that it entails less number of evaluations corresponding to $\Delta_{u,v} = 1$ compared to those of (\ref{eq:mtdc}).
	\end{rem}
	\begin{rem}
		In practice due to the complexity of traffic dynamics in computer networks, it is generally unlikely that the maximal traffic distribution condition can be met for every neighborhood. Alternatively, given a sufficiently small radius $\epsilon$, one may seek the realization of
		\begin{equation}
			\label{eq:remm}
			\lvert\Delta_{u,v} - 1\rvert \le \epsilon \quad (\forall u\in \mathscr{V})(\forall v\in \mathscr{N}_{u}).
		\end{equation}
	\end{rem}
	
	\section{Example Applications and Numerical Simulations}
	\label{sec:app}
	In this section, we illustrate how our traffic divergence theory can be helpful to model and solve different flow-based network problems. For this purpose, Section \ref{subsec:throu} employs the theory to model the dynamics of throughput in two popular network topologies used in datacenters. This formulation particularly maximizes the distribution of traffic in such topologies to minimize the risk of node/link congestion. The second application, described in Section \ref{subsec:adhoc}, expresses a traffic-driven formulation of a solution to power-optimized communication planning problem in ad-hoc robot networks using Traffic Divergence Theory. For all results presented in this section, we use NetworkX \cite{hagberg2008exploring} for network graph synthesis, and Gekko optimization suite \cite{beal2018gekko} for nonlinear programming.
	\begin{figure}
		\centering\includegraphics[scale=1]{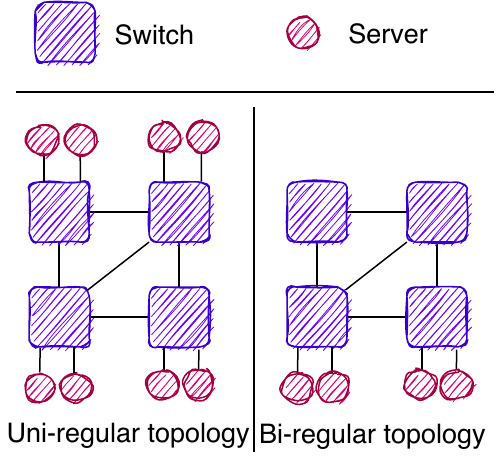}
		\caption{Uni-regular and bi-regular topologies in datacenters. (In the uni-regular case, all switches are connected to (usually similar) number of servers. However, some servers in the bi-regular case are exclusively connected to other switches for routing purposes.)}
		\label{fig:uni-bi}
	\end{figure}
	\subsection{Congestion-Avoiding Throughput Analysis for Datacenter Topologies}
	\label{subsec:throu}
	Throughput maximization is extremely desired in any network topology deployed in datacenters as a well-known measure of communication performance. In these networks, nodes are either switches or servers. Throughput depends on routing in not only computer networks \cite{leonardi2007optimal,johnson2009comparison,ma1998routing} but also wireless networks \cite{sun2015quantifying,dong2010secure,qin2005impact,zeng2008end} and even network-on-chip architectures \cite{ahmed2012xyz,kim2012clumsy,bakhoda2010throughput} and in datacenters \cite{xu2013greening,kabbani2014flowbender}. This dependence stems from the fact that different routing schemes may lead to various latency profiles that negatively correlate with throughput.
	
	Throughput maximization with respect to uni-regular and bi-regular topologies, as depicted in Figure \ref{fig:uni-bi}, has been extensively investigated \cite{jyothi2016measuring,singla2014high,munene2021throughput}. In particular, Jain method \cite{jain2014maximizing}, bisection bandwidth (BBW) method \cite{kloti2015policy}, and TUB method \cite{namyar2021throughput} have been presented as the most successful models yielding close throughput upper bounds with respect to real throughputs of various datacenter topologies. In this section, we propose a congestion-avoiding routing algorithm, based on Traffic Divergence Theory, using which the cited throughput gap reduces compared to the quoted methods. In other words, given a uni-regular or bi-regular topology associated with a datacenter network, we find the throughput profile of the topology according to the maximally-distributed traffic in that network. In this regard, we highlight the congestion awareness of our analysis that directly benefits from the traffic distribution tools of the theory introduced in Section \ref{sec:mach}.
	\begin{figure}
		\centering\includegraphics[scale=.9]{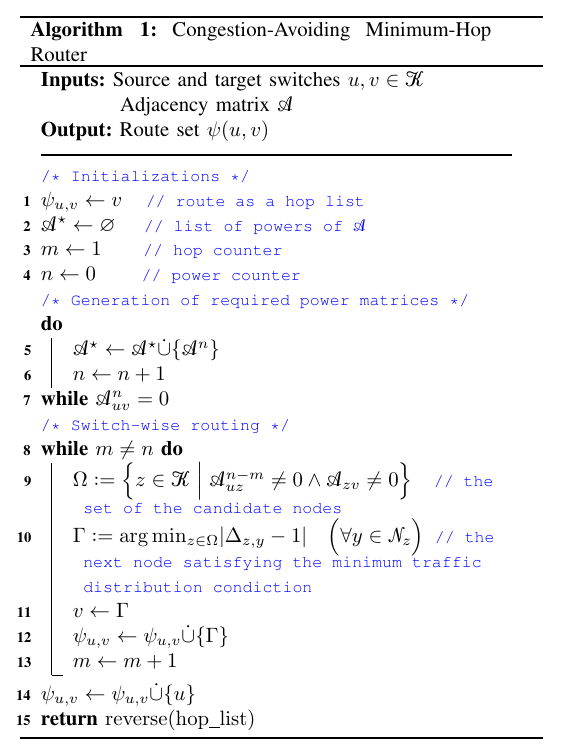}
	\end{figure}
	
	Congestion is a concern attributed to routes and their corresponding nodes. Thus, we intend to conduct our throughput analysis based on a congestion-aware routing strategy. For this purpose, we propose a congestion-avoiding minimum-hop routing algorithm, i.e.,  Algorithm 1, that finds optimal routes between two typical nodes such that their minimum distance and maximum traffic distribution are simultaneously taken into account as a trade-off using our traffic divergence theory. In particular, each route (see, Line 1 of Algorithm 1) is defined as a series of consecutive hops that reach a target node $v \in \mathscr{K}$ from a source node $u \in \mathscr{K}$. The length minimization criterion may be addressed using various powers of the adjacency matrix associated with a network (Lines 5-7). Namely, the $(u,v)$ entry of $\mathscr{A}^{n}$ denotes the number of routes from $u$ to $v$ including exactly $n$ hops. In Lines 9 and 10, the nodes enjoying the least traffic distributed in their neighborhood are found to shape the desired optimal route.
	
	The key idea for throughput analysis associated with a topology is the partitioning of the flow reaching a node to those belonging to the transient traffic and those planned to reach that node. The total profile of transient traffic can also be directly computed with respect to the routes of the topology. Accordingly, we derive a closed-form equation representing the throughputs corresponding to both uni-regular and bi-regular topologies subject to minimum node and/or link traffic divergence.
	\z{\begin{prop}
			\label{prop:cc}
			Given a network including node (switch) set $\mathscr{K}$, let $n\defeq |\mathscr{K}|$ be the dimension of the adjacency matrix associated with that network.Then, the worst-case computational complexity of Algorithm 1 is $\mathcal{O}\big(n^{3}\big)$.
	\end{prop}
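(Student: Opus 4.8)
The plan is to decompose Algorithm 1 into its constituent stages, bound the cost of each in terms of $n = |\mathscr{K}|$, and sum. First I would isolate the three sources of work visible in the pseudocode: (i) the minimum-hop feasibility test based on powers of the adjacency matrix $\mathscr{A}$ (Lines 5--7), where the $(u,v)$ entry of $\mathscr{A}^{k}$ counts the routes of exactly $k$ hops; (ii) the outer loop that grows a route one hop at a time; and (iii) the inner selection of the least-congested next node among the current node's neighbors via the node-TD metric (Lines 9--10). Establishing the $\mathcal{O}(n^3)$ bound amounts to showing that each stage is itself at most $\mathcal{O}(n^3)$ and that they are executed in sequence.

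For stage (i), I would invoke the standard fact that one product of two $n\times n$ matrices costs $\mathcal{O}(n^3)$ under schoolbook multiplication. Since any route is a path with distinct nodes, its length is at most $n-1$, so only the powers up to $\mathscr{A}^{n-1}$ are ever consulted; the essential point is that these are generated \emph{once} and stored, so that each subsequent minimum-hop test is a constant-time table lookup. I would therefore bound the entire hop-distance bookkeeping by a single $\mathcal{O}(n^3)$ all-pairs shortest-hop computation (equivalently one Floyd--Warshall pass), absorbing the matrix-power semantics into that bound rather than recomputing a power per hop.

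For stages (ii) and (iii), because the route visits distinct nodes the outer loop executes $\mathcal{O}(n)$ times. At each iteration the algorithm evaluates the congestion metric over the neighbors of the current node: evaluating a single $\nabla_{z}$ is a sum over $z$'s neighbors and hence $\mathcal{O}(n)$, and scanning the $\le n-1$ neighbors to locate the minimizer is $\mathcal{O}(n)$, so one iteration is $\mathcal{O}(n^2)$. Multiplying by the $\mathcal{O}(n)$ iterations yields $\mathcal{O}(n^3)$ for the route-construction phase. Adding the $\mathcal{O}(n^3)$ of stage (i) and noting that the remaining operations (initializations, comparisons) are of strictly lower order, the worst-case running time is $\mathcal{O}(n^3)$.

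The step I expect to be the main obstacle is the careful accounting of stage (i): a literal reading in which a fresh matrix power is formed at each of the $\mathcal{O}(n)$ hops would give $\mathcal{O}(n^4)$, so the crux is to prove that the adjacency-matrix powers (or the equivalent hop-distance table) are computed a constant number of times and amortized across the whole route, not once per hop. I would make this precise by showing the powers can be built incrementally as $\mathscr{A}^{k+1}=\mathscr{A}^{k}\mathscr{A}$ and cached, which decouples the $\mathcal{O}(n^3)$ linear-algebra cost from the $\mathcal{O}(n)$ hop count and thereby keeps the overall bound at $\mathcal{O}(n^3)$.
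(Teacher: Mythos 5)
Your decomposition of Algorithm 1 into a hop-distance stage and a route-construction loop mirrors the paper's own accounting, which charges $\mathcal{O}(n^{3})$ to the do-while block that manipulates the adjacency-matrix powers and only $(n-1)(Y+4)$ to the while loop, $Y$ being the maximum degree; your $\mathcal{O}(n)\times\mathcal{O}(n^{2})$ bound for the loop is sound, if more conservative than the paper's (which treats each TD evaluation as one operation). However, your resolution of what you correctly identify as the crux does not hold as stated. Building the powers incrementally via $\mathscr{A}^{k+1}=\mathscr{A}^{k}\mathscr{A}$ and caching them still requires up to $n-2$ full $n\times n$ matrix multiplications in the worst case, since the feasibility test must advance $k$ until the $(u,v)$ entry of $\mathscr{A}^{k}$ becomes nonzero and the shortest route may have length $n-1$; under schoolbook multiplication that is $\mathcal{O}(n^{4})$, not $\mathcal{O}(n^{3})$. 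Caching prevents recomputing the \emph{same} power twice, but it does not reduce the cost of computing all the \emph{distinct} powers the worst case demands, so the final paragraph of your argument leaves the claimed bound unproved.

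The gap is repairable in either of two ways, one of which you in fact name earlier: replace the matrix-power bookkeeping by a single all-pairs shortest-hop computation (one Floyd--Warshall pass, $\mathcal{O}(n^{3})$, or a breadth-first search from the source, which is even cheaper), or, staying closer to the algorithm as written, observe that only the row of $\mathscr{A}^{k}$ indexed by the source $u$ is ever consulted, so one may propagate the vector $e_{u}^{\top}\mathscr{A}^{k}$ by a vector-matrix product costing $\mathcal{O}(n^{2})$ per hop, giving $\mathcal{O}(n^{3})$ over the at most $n-1$ hops. The latter is presumably what the paper's terse assertion (``the length of an optimal route is $n$, thus the do-while block is $\mathcal{O}(n^{3})$'') implicitly relies on. Your writeup should commit to one of these mechanisms and drop the incremental-caching justification, which as phrased would only support an $\mathcal{O}(n^{4})$ bound.
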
}
	\begin{proof}
		See, Appendix \ref{app:cc}.
	\end{proof}
	\z{\begin{thm}[Throughput Subject to Maximum Traffic Distribution]
			\label{thm:throughput}
			Given a traffic matrix\footnote{Our doubly-stochastic traffic matrix formulation particularly follows the saturated hode model \cite{duffield1999flexible}. In this model, nodes are considered to be switches rather than servers. However, with modification of traffic matrix definition, one can take a server-level analysis, as well.} $\mathscr{T}$ associated with a uni-regular or a bi-regular topology, let $\mathscr{E}$ be the number of switch-to-switch links in that topology. Denote by $\mathscr{K}$ the set of all switches in the topology. Let and $\mathscr{H}$ and $\mathscr{H}_u$ be the universal number of servers connected to each server in the uni-regular case and the number of servers connected to switch $u \in \mathscr{K}$ in the bi-regular case, respectively. Then, the maximum achievable throughput for the topologies based on the maximal traffic distribution requirement is as follows.
			\begin{equation}
				\label{eq:throughput}
				\theta^\mathscr{T} = \frac{2\mathscr{E}}{\mathscr{H}}\Bigg[1 + \smashoperator[l]{\sum\limits_{\substack{u \in \mathscr{K}, v \in \mathscr{K}\setminus\{u\},\\\rho \in \psi_{u,v}}}} \Bigg[\alpha_{\rho}|\rho| \Big[\sum_{w \in \rho}^{} \nabla_{w}\Big]^{-1}\Bigg] - |\psi_{u,v}| \Bigg]^{-1}
			\end{equation}
	\end{thm}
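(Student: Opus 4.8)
The plan is to begin from the operational definition of throughput for a saturated, doubly-stochastic traffic matrix: $\theta^\mathscr{T}$ is the largest scaling factor such that the demand $\theta\,\mathscr{T}$ can be simultaneously routed across the switch-to-switch fabric without violating any unit-normalized link capacity. First I would write down the associated feasibility program, whose two governing constraints are flow conservation at every switch and the per-link capacity bound, with the objective of maximizing $\theta$. Because the matrix is doubly stochastic and follows the saturated-host model, the aggregate demand sourced and sunk at each switch is fixed by its server count, namely $\mathscr{H}$ servers per switch in the uni-regular case and $\mathscr{H}_u$ servers in the bi-regular case. This is where the prefactor $2\mathscr{E}/\mathscr{H}$ originates: $2\mathscr{E}$ is the total directed switch-to-switch capacity budget, and $\mathscr{H}$ converts that aggregate budget into a per-server rate.

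Next I would carry out the transient/terminating decomposition announced just before the statement. At each switch $w$ lying on a route $\rho \in \psi_{u,v}$, the flow arriving splits into the portion destined for $w$ (terminating) and the portion merely passing through (transient); summing the transient portions over all switches on $\rho$ yields the link occupancy contributed by that route. Writing the routing split variables $\alpha_\rho$ supplied by Algorithm 1, with $\sum_{\rho\in\psi_{u,v}}\alpha_\rho = 1$ for each pair, the total directed link load becomes a sum over pairs $(u,v)$ and routes $\rho$ of $\theta\,\mathscr{T}_{u,v}\,\alpha_\rho$ weighted by the hops the route uses. The key analytic step is to recognize that the transient profile along $\rho$ is governed by the route's net imbalance, and by Proposition \ref{prop:rrtd} this imbalance is exactly $\nabla_\rho = \sum_{w\in\rho}\nabla_w$; this is what produces the factor $|\rho|\big[\sum_{w\in\rho}\nabla_w\big]^{-1}$ appearing in the denominator.

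I would then impose the maximal traffic distribution requirement. By Theorem \ref{thm:loc} the global condition $\Delta_{u,v}=1$ is equivalent to its localized neighborhood version, so in the maximum-distribution regime the spatial TD rates are equalized across adjacent switches. This equalization removes the remaining routing degrees of freedom: it pins the relative flow allocation so that the capacity bound becomes simultaneously tight, turning the feasibility program into a single scalar equation in $\theta$. The additive constants $1$ and $-|\psi_{u,v}|$ arise here from separating the terminating contribution (one unit per pair) from the per-route transient contributions (one baseline unit per route in $\psi_{u,v}$), after which collecting terms and solving the tight capacity equation for $\theta$ yields the closed form (\ref{eq:throughput}).

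The main obstacle I anticipate is the transient-traffic bookkeeping in the second step: correctly attributing the flow that passes through each intermediate switch to link occupancy while avoiding double counting at route junctions, and establishing rigorously that the aggregated transient load equals $|\rho|/\nabla_\rho$ under the maximal-distribution equalization rather than merely being proportional to it. A secondary difficulty is handling the uni-regular and bi-regular cases in one stroke: in the uni-regular case the common server count $\mathscr{H}$ factors out cleanly, whereas in the bi-regular case the per-switch counts $\mathscr{H}_u$ enter the conservation constraints inhomogeneously, and I would need to verify that the maximal-distribution condition forces the weighted sum of the $\mathscr{H}_u$ to collapse back to the single $\mathscr{H}$ appearing in the prefactor.
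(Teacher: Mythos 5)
Your skeleton matches the paper's: both arguments hinge on computing the total transient traffic $\sum_{u\in\mathscr{K}}\Lambda_u^{\mathscr{T}}$ twice --- once from a port/capacity budget, using $\sum_{u}[\mathscr{R}_u-\mathscr{Q}_u-\mathscr{H}]=2\mathscr{E}$, which gives $2\mathscr{E}$ minus the terminating demand, and once route-by-route as $\theta^{\mathscr{T}}\sum_{u,v}\mathscr{T}_{uv}\sum_{\rho\in\psi_{u,v}}\beta_\rho^{\mathscr{T}}(|\rho|-1)$ --- then equating the two, invoking double stochasticity to reduce the demand sums to $\mathscr{H}$, and solving for $\theta^{\mathscr{T}}$. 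Your identification of the terminating-versus-transient decomposition, of Proposition \ref{prop:rrtd} as the source of the $\sum_{w\in\rho}\nabla_w$ factor, and of where the $+1$ and $-|\psi_{u,v}|$ terms come from is the paper's argument in all essentials; your LP wrapper is extra scaffolding the paper does not use, since it never solves an optimization problem, only an accounting identity.

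Two of your steps diverge, and one is a genuine problem. First, the coefficient bookkeeping: in the paper the split ratio is $\beta_\rho^{\mathscr{T}}=\alpha_\rho\big(\sum_{w\in\rho}\nabla_w\big)^{-1}$, chosen proportional to the inverse route TD for congestion avoidance, and the normalization is $\sum_{\rho\in\psi_{u,v}}\beta_\rho^{\mathscr{T}}=1$; the $\alpha_\rho$ are the resulting proportionality constants and are \emph{not} required to sum to one. Your convention $\sum_\rho\alpha_\rho=1$ puts the $\nabla^{-1}$ factors in the wrong place and would not reproduce the $-|\psi_{u,v}|$ term as written. Second, and more substantively, you claim that imposing maximal traffic distribution via Theorem \ref{thm:loc} removes the residual routing freedom and makes the capacity constraints simultaneously tight, so that the program collapses to a determined scalar equation. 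The paper neither uses Theorem \ref{thm:loc} in this proof nor establishes any such determinacy; the Remark immediately after the theorem states that the $\alpha_\rho$ are essentially unconstrained and that this freedom is precisely why a gap persists between (\ref{eq:throughput}) and the true multi-commodity-flow throughput. So the step you flag as your main anticipated obstacle --- proving the aggregated transient load \emph{equals} $|\rho|\big[\sum_{w\in\rho}\nabla_w\big]^{-1}$ rather than being proportional to it --- is not closed in the paper either; it is absorbed into the free coefficients. If you keep the LP framing you must either prove that determinacy claim or retreat, as the paper does, to a parameterized identity. Your worry about the bi-regular case is resolved in the paper by inserting a selector function $\mathscr{S}(u)$ against $\mathscr{H}_u$ in the per-switch balance, after which equation (\ref{eq:first}) is unchanged and the uni-regular algebra carries over verbatim.
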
}
	\begin{proof}
		See, Appendix \ref{app:throughput}.
	\end{proof}
	\begin{rem}
		In the theorem above, there is no general constraint to select the elements of the co-efficient set 
		\begin{equation}
			\label{eq:coeff}
			\{1>\alpha_{\rho} \in \mathbb{R}^{\ge 0} \mid\forall \rho \in \psi_{u,v}   \},
		\end{equation}
		because practically not all route selections are based on the shortest ones, and there are numerous potential functional requirements, according to a datacenter and its network size, that may impact that route selection. Thus, in practice, one expects to observe a gap between the real throughput (as the solution to the path-based multi-commodity flow problem \cite{kumar2018semi}) of a network and what (\ref{eq:throughput}) predicts.
	\end{rem}
	\begin{figure*}
		\centering
		\begin{subfigure}{0.5\textwidth}
			\centering\includegraphics[scale=0.6]{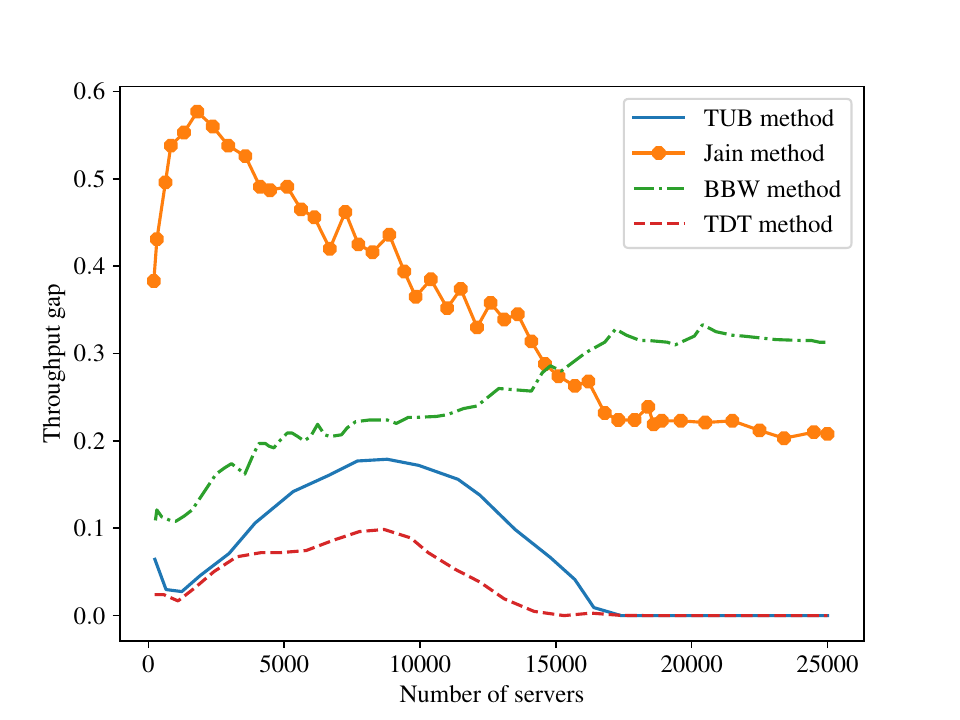}
			\caption{Jellyfish topology with $\mathscr{H} = 10$.}
			\label{fig:gap-j}
		\end{subfigure}%
		\begin{subfigure}{0.5\textwidth}
			\centering\includegraphics[scale=0.6]{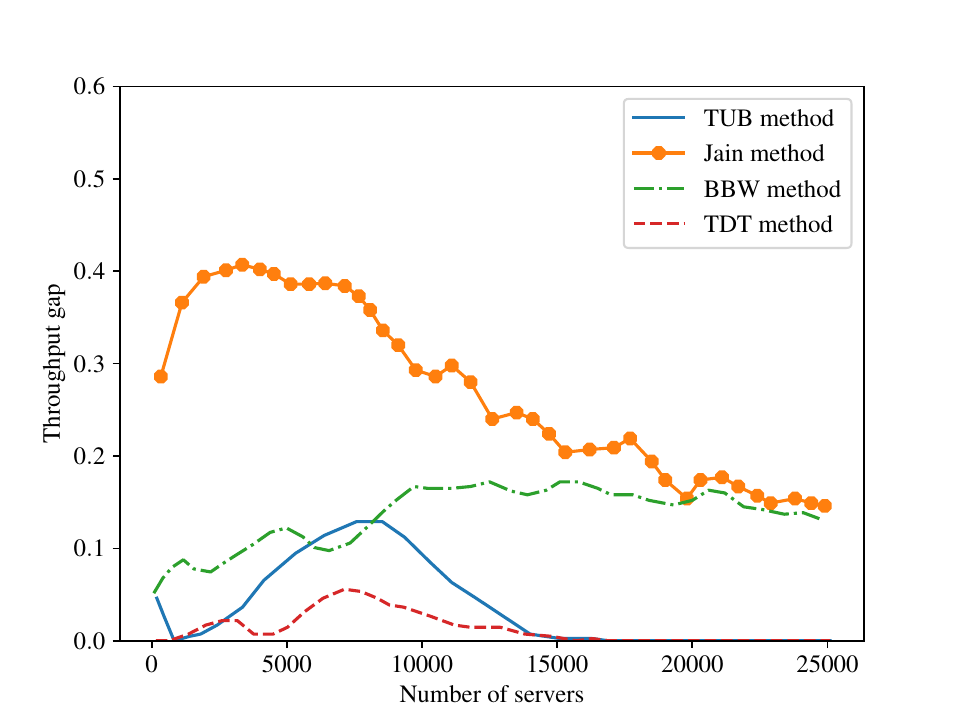}
			\caption{Fat-tree topology with $\mathscr{H}_{u} = 10$.}
			\label{fig:gap-f}
		\end{subfigure}
		\caption{Throughput gap reduction for datacenter networks using Traffic Divergence Theory.}
		\label{fig:manmade}
	\end{figure*}
	\begin{figure*}
		\centering
		\begin{subfigure}{0.5\textwidth}
			\centering\includegraphics[scale=0.6]{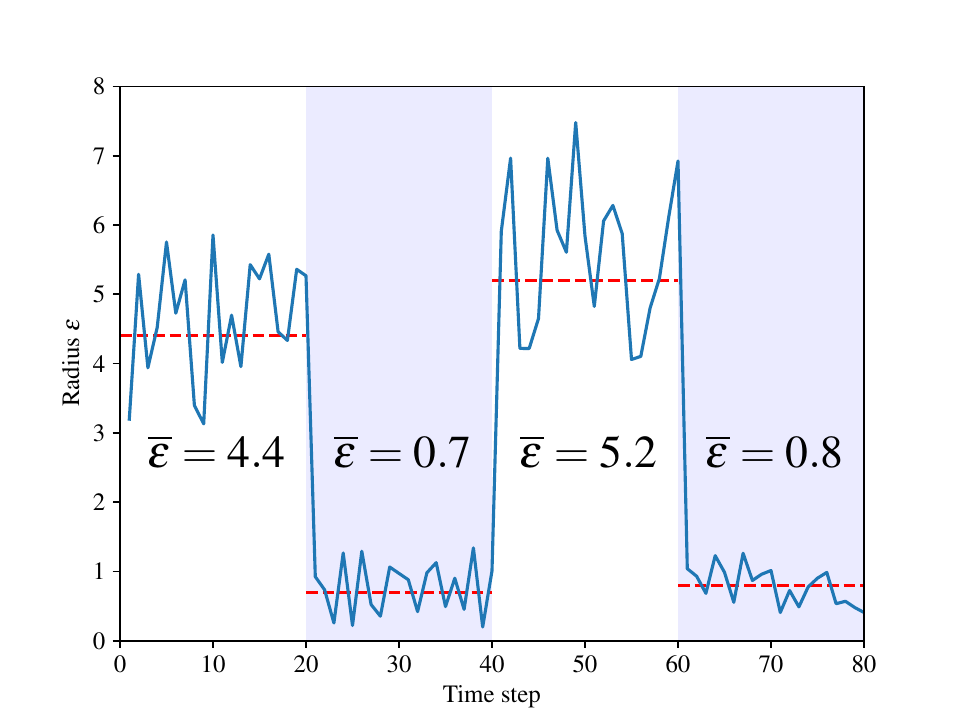}
			\caption{Jellyfish topology with $\mathscr{H} = 10$.}
			\label{fig:rad-j}
		\end{subfigure}%
		\begin{subfigure}{0.5\textwidth}
			\centering\includegraphics[scale=0.6]{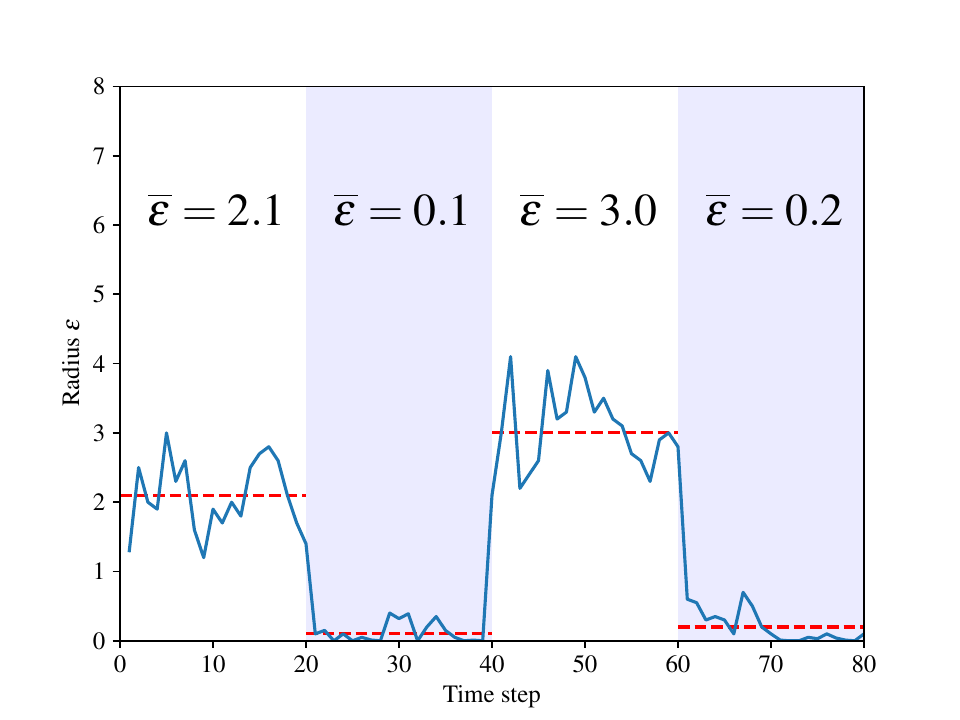}
			\caption{Fat-tree topology with $\mathscr{H}_{u} = 10$.}
			\label{fig:rad-f}
		\end{subfigure}
		\caption{Traffic distribution for datacenter networks using Traffic Divergence Theory. (The radius $\epsilon$ dynamics associated with a particular $\Delta_{u,v}$ is depicted. The average radius of each interval is noted on the figure.)}
		\label{fig:manmade1}
	\end{figure*}
	
	\z{We numerically validate that the result of Theorem \ref{thm:throughput} indeed reduces the throughput gap compared to Jain method \cite{jain2014maximizing}, BBW method \cite{kloti2015policy}, and TUB method \cite{namyar2021throughput}. As already noted, throughput gap is defined as the difference between the real throughput of a network topology and the predicted throughput based on the closed-form formulation in each of the quoted methods. We take Jellyfish topology \cite{singla2012jellyfish} and Fat-tree topology \cite{requena2008ruft} into account as typical uni-regular and bi-regular topologies, respectively. The total number of servers are selected according to the cardinality set $\{1, 500, 1000, \cdots, 25000\}$. Server-per-switch parameters are $\mathscr{H}\!=\!10$ (resp., $\mathscr{H}_{u}\!=\!8$) for uni-regular (resp., bi-regular) topology tests. Links are generated based on a uniform random distribution. In each simulation scenario, each entry $c_{u,v}$ of the capacity matrix $\mathscr{C}$ is 0 if there is no (in)direct link from $u$ to $v$, otherwise a positive number otherwise. The traffic matrix $\mathscr{T}$ of each scenario is randomly generated bounded by its corresponding capacity matrix. The profile of the coefficient set $\{\alpha_{\rho} \mid\forall \rho \in \psi(u,v) \}$ is based on the length of each contributing optimal route $\rho$, i.e., the longer the route is, the larger its coefficient is proportionally. This selection scheme roughly values the longer routes because of their larger capacity for traffic transfer.}
	
	\z{As Figure \ref{fig:gap-j} illustrates, the throughput gap associated with Jellyfish topology reduces, compared to the best available estimations in the literature, should one employs Algorithm 1 at the heart of which our traffic divergence theory functions. By increasing the number of servers, the gap curve asymptotically tends to zero because the number of optimal routes roughly increases. Thus, the contribution assignment of optimal routes, based on the coefficient set, become less relevant to the overall gap. A similar trend corresponding to Fat-tree topology may also be observed in Figure \ref{fig:gap-f}. The overall gap reduction is larger than that of Jellyfish topology since Fat-tree topology, similarly to other bi-regular topologies, includes more route options to fulfill maximal traffic distribution criteria of Algorithm 1.}
	
	\z{One can also validate the maximal distribution of traffic by the activation of Algorithm 1 leads to the traffic distribution maximization based on how radius $\epsilon$ varies in (\ref{eq:remm}). In particular, as depicted in Figure \ref{fig:rad-j}, we take a particular pair of nodes, labeled by $u$ and $v$, in the described Jellyfish topology to trace their relative traffic. In particular, given two typical nodes $u$ and $v$, the smaller the $\epsilon$ radius is, the more distributed their traffic are. We set up an experiment including 80 time steps partitioned to 4 equal intervals of 20 time steps. During the first and the third (resp., the second and the forth) intervals, we switch off (resp., on) the traffic distribution mechanism, i.e., Algorithm 1. So, we observe that the $\epsilon$ radii in the second and the fourth intervals are much closer to zero compared to those of the first and the third intervals. The same dynamics may be observed for any arbitrary pair of nodes. In a similar vein considering the same pair of nodes in the previous case, Figure \ref{fig:rad-f} also expresses the effectiveness of our formalism to distribute traffic around those nodes in Fat-tree topology. Following the discussion about the throughput gap, since Fat-tree topology generally includes more traffic-driven optimal routes, its radius $\epsilon$ is smaller than that of a Jellyfish topology.}
	\subsection{Power-Optimized Communication Planning for Robot Ad-hoc Networks}
	\label{subsec:adhoc}
	In ad-hoc robot networks, nodes are mobile robots whose movements are subject to the connectivity preservation of their links, which are point-to-point communicational channels between them. For such networks, the point-point assumption of the theory is also reasonable \cite{sinha2016throughput,saeed2021point,rahman2004controlled} given the current trend of beamforming with directional antennas \cite{ramanathan2001performance,dai2013overview}.
	
	Many mission-critical ad-hoc robot networks, including those used for firefighting \cite{kumar2004robot}, space operations \cite{clark2003dynamic}, etc., access very limited energy resources whose fast re-charge may be either impractical or very expensive. So, in the course of these missions, the least energy-consuming plan to perform a mission is the best one. In such ad-hoc robot networks, desired processing resources at a node directly depend on the traffic divergence of that node that quantifies its data transfer and processing load. So, given a desired throughput profile, and some power criteria, one seeks the minimum divergence for all nodes that is often equivalent to the least expensive hardware orchestration for that network. For this purpose, let $\mathscr{C}$ and $\mathscr{T}$ be the capacity and traffic matrices, respectively, associated with a typical ad-hoc robot network. Denote by $\theta^{\mathscr{T}}(\nabla_{u})$ the throughput profile of this network expressed in terms of the traffic divergence profiles of its nodes $(\forall u \in \mathscr{K})$. Moreover, assume a dynamics threshold $m$ to set a power limit for communication activities of this node. This limit determines an upper bound associated with the allowable power consumed by a node due to performance and energy considerations \cite{choi2017low}.
	
	Then, the quoted power-optimized communication problem can be formulated as an optimization problem as follows.
	\begin{figure}
		\centering\includegraphics[scale=1]{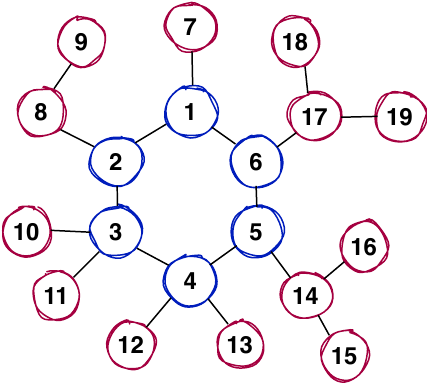}
		\caption{The ring topology associated with the hypothetical ad-hoc robot network.}
		\label{fig:topo}
	\end{figure}
	\begin{mini!}
		{\nabla_{u}}{\prod_{u\in\mathscr{K}}\nabla_{u}}{}{}\nonumber
		\addConstraint{\mathscr{T}\cdot\theta^{\mathscr{T}}(\nabla_{u})}{\le \mathscr{C}\quad}{(\forall u \in \mathscr{K})}\label{eq:cons1}
		\addConstraint{\Bigg|\frac{\partial \nabla_{u}}{\partial t}\Bigg|}{\le m}{(\forall u \in \mathscr{K})}\label{eq:cons2}
		\addConstraint{\nabla_{u}}{\in \mathbb{R}}{(\forall u \in \mathscr{K})}\nonumber
	\end{mini!}
	Here, (\ref{eq:cons1}) imposes any desired communication requirements on the planning process, while (\ref{eq:cons2}) governs communication activities of each node in terms of how much its traffic divergence rates are allowed to vary.
	We assume an ad-hoc robot network, including 19 nodes, with a ring topology, as shown in Figure \ref{fig:topo}. A ring topology includes a backbone ring and a set of branches \cite{macktoobian2023learning}. This topology simultaneously realizes optimal communication and fault tolerance for ad-hoc robot networks. Given a fix capacity matrix, we plan a traffic matrix in which each node has to communicate with all other nodes in that network. The doubly-stochastic entries of this matrix are weighted samples of a uniform distribution.
	
	Figure \ref{fig:opt} displays the solution profiles associated with two thresholds $m=1.1$ and $m=0.9$. Nodes \#1 to \#6 are the elements of the backbone cycle of the topology, as depicted in Figure \ref{fig:topo}, through which the majority of the network transient traffic is transmitted. Thus, it is expected that the optimal traffic divergences of these nodes are higher than those of the other nodes. Following the same line of reasoning, the least power burden associates with those nodes that have only one neighbor, e.g., node \#7. That is because such leaf nodes only process their own traffic but not any transient one. Accordingly, they may require less computational resources and energy consumption. Thus, their minimized traffic divergence values are relatively small. The performed simulations manifest that $m\le 0.8$ makes the mathematical program infeasible.
	\begin{figure}
		\centering\includegraphics[scale=0.7]{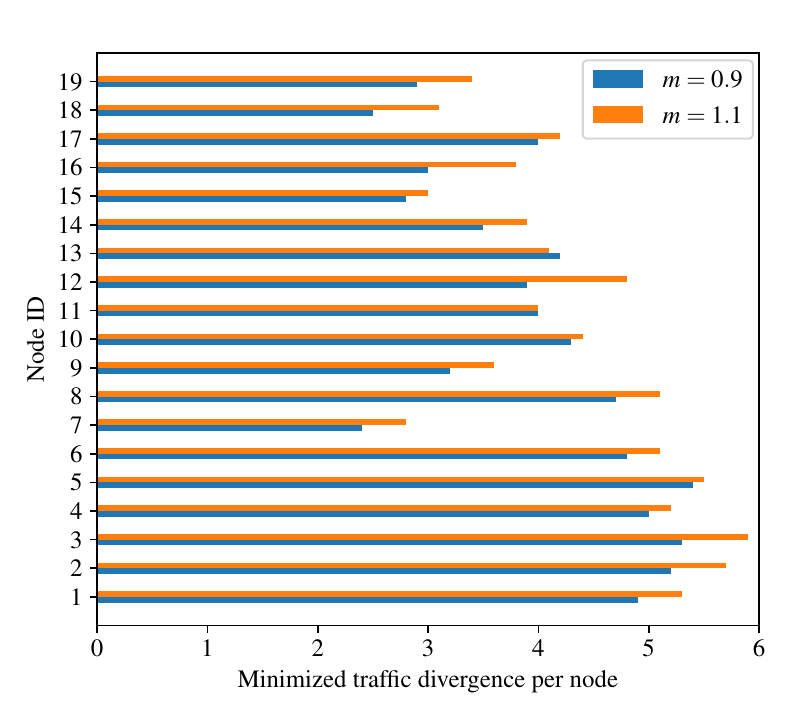}
		\caption{A typical solution to the power-optimized communication planning for an ad-hoc robot network}
		\label{fig:opt}
	\end{figure}
	\z{\section{Discussion}
		\label{sec:disc}
		We believe that this theory, given its general and efficiently-expressive formalism, may be capable of resolving even more complex problems in the scope of computer networks. Namely, it may provide new insights about less intuitive traffic-based mechanisms in complex networks, for example, similarly to what Theorem \ref{thm:nttr} asserts in view of entangled spatial and temporal traffic dynamics. With this aim, further research may be conducted to expand the theory by incorporating new ideas and analysis tools into its current formalism.}
	
	\z{The improvement of the solutions to the two example applications of the article may also be taken as other venues of research. In particular to the throughput gap problem, as already stated, the determination of coefficient set associated with the normalization of split ratios is a critical step for throughput predictions. Our strategy may be substituted with better metrics to set these coefficients for further reduction of the reported throughput gap. Regarding the power-optimized communication planning problem, one already notes the complexity of the nonlinear program that associates with the problem. The larger a target ad-hoc network is, the more problematic the complexity of this program will become in a computational point of view. To alleviate this issue, it would be beneficial to find equivalent, but less complex, constraints to simplify the program and its feasibility analysis for large-scale networks. Moreover, queuing features of nodes are abstracted away in this work, that is, the buffer size of queues are assumed to be infinite. However, one may incorporate more restrictive queue-dependent constraints into the current theory to realize a delay-dependent version of the current formalism for desired applications and analyses.}  Finally, this theory may provide new insights regarding control and behavior generation of multi-robot systems ranging from small teams \cite{lee2009decentralized} to large-scale swarms \cite{macktoobian2021astrobotics,macktoobian2019supervisory}. 
	\section{Concluding Remarks}
	\label{sec:conc}
	In this article, we presented a traffic divergence theory to model network dynamics in view of traffic variations. We elaborated on a computationally-rich set of properties associated with this theory that can express various spatial and temporal characterizations corresponding to traffic distribution. The theory may be generally applied to different types of networks for the analysis of those concepts that are entangled with traffic distribution. In this regard, we exhibited the wide applicability of the theory by its usage to model and solve two non-trivial problems: first, we improved the throughput gap corresponding to both uni-regular and bi-regular topologies; second, we planned power-optimized communication for ad-hoc robot networks.
	
	\section*{Appendix} 
	\appendix  
	\section{The Proof of Proposition \ref{prop:ltc}} 
	\label{app:ltc}
	As sink flows toward the nodes positively contribute to the link TD except those passed between them, the net sink flow may be written as
	\begin{equation}
		\nabla_{u,v}^{\leftarrow} = \sum_{z \in \mathscr{N}_{u}\setminus\{v\}}^{} [f_{zu} - f_{vu} ] + \sum_{z \in \mathscr{N}_{v}\setminus\{u\}}^{} [f_{zv} - f_{uv}].
	\end{equation}
	Similarly to the sink case, source flows leaving the nodes negatively contribute to the net divergence of their link while the direct flow between the nodes do not participate in the net term because they are not interface flows of the link but internal ones between the nodes: 
	\begin{equation}
		\nabla_{u,v}^{\rightarrow} = \sum_{z \in \mathscr{N}_{u}\setminus\{v\}}^{} [f_{uz} - f_{uv} \Big] + \sum_{z \in \mathscr{N}_{v}\setminus\{u\}}^{} [f_{vz} - f_{vu}].
	\end{equation}
	Thus, one yields 
	\begin{equation}
		\begin{split}
			\nabla_{u,v} =& \nabla_{u,v}^{\leftarrow} - \nabla_{u,v}^{\rightarrow}\\=& \sum_{z \in \mathscr{N}_{u}}^{} [f_{zu} - f_{uz}] + \sum_{z \in \mathscr{N}_{v}}^{} [f_{zv} - f_{vz}]
			\\=&  \nabla_{u} + \nabla_{v},
		\end{split}
	\end{equation}
	which completes the proof.
	\section{The Proof of Proposition \ref{prop:rrtd}}
	\label{app:rrtd}
	If the length of $\Omega$ is 2, i.e., $\Omega$ being a link, then its TD is trivially the link TD, according to Proposition \ref{prop:ltc}. As the first non-trivial case, assume that $\Omega$ is constituted by nodes $u,v,w \in \mathscr{V}$ such that $\Omega$ is a route between $u$ and $v$ where $w$ is the exclusive node between them. Then, writing the equations associated with the sink and source components of $\nabla_{u,w}$ and $\nabla_{w,v}$ then following the same line as used in the proof of Proposition \ref{prop:ltc} yields $
	\nabla_{\Omega} = \nabla_{u} + \nabla_{w} + \nabla_{v}$. The claim of this proposition can be obtained by mathematical induction on the length of $\Omega$.
	\section{The Proof of Proposition \ref{prop:nlrtd}}
	\label{app:nlrtd}
	Using Propositions \ref{prop:ltc} and \ref{prop:rrtd} yields
	\begin{equation}
		\begin{split}
			\nabla_{\Omega} &= \nabla_{u} + \nabla_{w} + \nabla_{v} \\&= (\nabla_{u} + \nabla_{w}) + \nabla_{v} = \nabla_{u,w} + \nabla_{v} \\&= (\nabla_{w} + \nabla_{v}) + \nabla_{u} = \nabla_{w,v} + \nabla_{u}.
		\end{split}
	\end{equation}
	\section{The Proof of Theorem \ref{thm:dtd}}
	\label{app:dtd}
	We prove the claim by induction of $n$. For the base case associated with $n=1$, we start from the spatial link-node TD derivative.
	\begin{align}
		\label{eq:ttt1}
		\frac{\partial\nabla_{u,v}}{\partial\nabla_{u}} = \frac{\partial(\nabla_{u} + \nabla_{v})}{\partial\nabla_{u}} = 1 + \frac{\partial \nabla_{v}}{\partial\nabla_{u}}\\
		\label{eq:ttt2}
		\frac{\partial\nabla_{u,v}}{\partial\nabla_{v}} = \frac{\partial(\nabla_{u} + \nabla_{v})}{\partial\nabla_{v}} = 1 + \frac{\partial \nabla_{u}}{\partial\nabla_{v}}
	\end{align}
	Subtraction of the second equation from the first one yields
	\begin{equation}
		\frac{\partial\nabla_{u,v}}{\partial\nabla_{u}} - \frac{\partial\nabla_{u,v}}{\partial\nabla_{v}} = [\frac{\partial}{\partial \nabla_{u}} - \frac{\partial}{\partial \nabla_{v}}]\nabla_{u,v} = \frac{\partial \nabla_{v}}{\partial \nabla_{u}} - \frac{\partial \nabla_{u}}{\partial \nabla_{v}}.
	\end{equation}
	Arbitrary number of successive partial derivations applied to (\ref{eq:ttt1}) and (\ref{eq:ttt2}) leads to what follows.
	\begin{align}
		\frac{\partial^{n}\nabla_{u,v}}{{\partial\nabla_{u}}^{n}} =
		\frac{\partial^{n}(\nabla_{u} + \nabla_{v})}{{\partial\nabla_{u}}^{n}} = 1+	\frac{\partial^{n} \nabla_{v}}{{\partial\nabla_{u}}^{n}}\\
		\frac{\partial^{n}\nabla_{u,v}}{{\partial\nabla_{v}}^{n}} = \frac{\partial^{n}(\nabla_{u} + \nabla_{v})}{{\partial\nabla_{v}}^{n}}=1+ \frac{\partial^{n} \nabla_{u}}{{\partial\nabla_{v}}^{n}}
	\end{align}
	Similarly to the base case, subtracting the second equation from the first one yields the expected result.
	\section{The Proof of Theorem \ref{thm:nttr}}
	\label{app:nttr}
	Since
	\begin{equation}
		\label{eq:main}
		\frac{\partial \nabla_{u}}{\partial t} = \sum_{z \in \mathscr{N}_{u}} \frac{\partial \nabla_{u}}{\partial \nabla_{z}}\frac{\partial \nabla_{z}}{\partial t}
	\end{equation}
	Using Cauchy-Schwartz inequality, one yields
	\begin{equation}
		\Big[\sum_{z \in \mathscr{N}_{u}} \frac{\partial \nabla_{u}}{\partial \nabla_{z}}\frac{\partial \nabla_{z}}{\partial t}\Big]^{2}\le \sum_{z \in \mathscr{N}_{u}} \Big[\frac{\partial \nabla_{u}}{\partial \nabla_{z}}\Big]^{2}\sum_{z \in \mathscr{N}_{u}} \Big[\frac{\partial \nabla_{z}}{\partial t}\Big]^{2}.
	\end{equation}
	Given the following elementary result, where $\{x_{i}\}$ is a finite series of integer numbers including $m$ elements whose average is $\overline{x}$
	\begin{equation}
		\label{eq:stat}
		\sum_{i=1}^{m}x_{i} = m(\frac{1}{m}\sum_{i=1}^{m}x_{i})^{2} + \frac{1}{m}\sum_{i=1}^{m} (x - \overline{x})^{2},
	\end{equation}
	we define average spatial and temporal traffic rates at $u$ as follows.
	\begin{align}
		\partial_{u,z} &\defeq n^{-1}\sum\limits_{z \in \mathscr{N}_{u}}\frac{\partial \nabla_{u}}{\partial \nabla_{z}}\label{eq:ave_1}\\
		\partial_{z,t} &\defeq n^{-1}\sum\limits_{z \in \mathscr{N}_{u}}\frac{\partial \nabla_{z}}{\partial t}\label{eq:ave_2}
	\end{align}
	By writing (\ref{eq:stat}) with respect to $\frac{\partial \nabla_{u}}{\partial \nabla_{z}}$ using (\ref{eq:ave_1}), the first term in the RHS of (\ref{eq:main}) may be written as follows.
	\begin{align}
		\sum_{z \in \mathscr{N}_{u}}\Bigl[\frac{\partial \nabla_{u}}{\partial \nabla_{z}}\Bigr]^{2}&=  \frac{1}{n}\Bigl[\Bigl(\sum_{z \in \mathscr{N}_{u}}\frac{\partial \nabla_{u}}{\partial \nabla_{z}}\Bigr)^{2}+\sum_{z \in \mathscr{N}_{u}}\Bigl(\frac{\partial \nabla_{u}}{\partial \nabla_{z}}-\partial_{u,z}\Bigr)^{2}\Bigr]\nonumber\\
		&= \frac{1}{n}\Bigl[\Bigl(\sum_{z \in \mathscr{N}_{u}}\frac{\partial \nabla_{u}}{\partial \nabla_{z}}\Bigr)^{2}+\sum_{z \in \mathscr{N}_{u}}\Bigl(\frac{\partial \nabla_{u}}{\partial \nabla_{z}}-\frac{1}{n}\sum_{z' \in \mathscr{N}_{u}}\frac{\partial \nabla_{u}}{\partial \nabla_{z'}}\Bigr)^{2}\Bigr]
		\nonumber\\&=
		\frac{1}{n}\Bigl[\Bigl(\sum_{z \in \mathscr{N}_{u}}\frac{\partial \nabla_{u}}{\partial \nabla_{z}}\Bigr)^{2}+\sum_{z \in \mathscr{N}_{u}}
		\Bigl[
		\Bigl(\frac{\partial \nabla_{u}}{\partial \nabla_{z}}\Bigr)^{2}+
		\frac{1}{n^{2}}\Bigl(\sum_{z' \in \mathscr{N}_{u}}\frac{\partial \nabla_{u}}{\partial \nabla_{z'}}\Bigr)^{2}-
		\frac{2}{n}\Bigl(\frac{\partial \nabla_{u}}{\partial \nabla_{z}}\Bigr)\Bigl(\sum_{z' \in \mathscr{N}_{u}}\frac{\partial \nabla_{u}}{\partial \nabla_{z'}}\Bigr)\Bigr]
		\Bigr]\nonumber\\&= \frac{1}{n}\Bigl[\Bigl(\sum_{z \in \mathscr{N}_{u}}\frac{\partial \nabla_{u}}{\partial \nabla_{z}}\Bigr)^{2}
		+\sum_{z \in \mathscr{N}_{u}}
		\Bigl(\frac{\partial \nabla_{u}}{\partial \nabla_{z}}\Bigr)^{2}+
		\frac{1}{n^{2}}\sum_{z \in \mathscr{N}_{u}}\Bigl(\sum_{z' \in \mathscr{N}_{u}}\frac{\partial \nabla_{u}}{\partial \nabla_{z'}}\Bigr)^{2}
		-
		\frac{2}{n}\sum_{z \in \mathscr{N}_{u}}\Bigl(\frac{\partial \nabla_{u}}{\partial \nabla_{z}}\Bigr)\Bigl(\sum_{z' \in \mathscr{N}_{u}}\frac{\partial \nabla_{u}}{\partial \nabla_{z'}}\Bigr)
		\Bigr]\label{eq:last}
	\end{align}
	In the RHS of the equation above, the outer summation in the third term has a free index. So, that summation is equivalent to the multiplication of the cardinality of its index set, i.e., $n$, to its subsequent summation. Then, index $z'$ of the inner summation may be renamed to $z$. In the forth term, we have two identical summations multiplied to each other. So, (\ref{eq:last}) may be further simplified as below.
	\begin{equation*}
		\begin{split}
			\sum_{z \in \mathscr{N}_{u}} \Bigl[\frac{\partial \nabla_{u}}{\partial \nabla_{z}}\Bigr]^{2}&= \frac{1}{n}\Bigl[\Bigl(\sum_{z \in \mathscr{N}_{u}}\frac{\partial \nabla_{u}}{\partial \nabla_{z}}\Bigr)^{2}
			+
			\sum_{z \in \mathscr{N}_{u}}
			\Bigl(\frac{\partial \nabla_{u}}{\partial \nabla_{z}}\Bigr)^{2}
			+
			\frac{1}{n}\Bigl(\sum_{z \in \mathscr{N}_{u}}\frac{\partial \nabla_{u}}{\partial \nabla_{z}}\Bigr)^{2}
			-
			\frac{2}{n}\Bigl(\sum_{z \in \mathscr{N}_{u}}\frac{\partial \nabla_{u}}{\partial \nabla_{z}}\Bigr)^{2}
			\Bigr]\\
			&=\frac{1}{n}\Bigl[\Bigl(\sum_{z \in \mathscr{N}_{u}}\frac{\partial \nabla_{u}}{\partial \nabla_{z}}\Bigr)^{2}	
			\sum_{z \in \mathscr{N}_{u}}
			\Bigl(\frac{\partial \nabla_{u}}{\partial \nabla_{z}}\Bigr)^{2}
			-
			\frac{1}{n}\Bigl(\sum_{z \in \mathscr{N}_{u}}\frac{\partial \nabla_{u}}{\partial \nabla_{z}}\Bigr)^{2}
			\Bigr]
		\end{split}
	\end{equation*}
	One can take the definition of spatial traffic divergence rate into account so that the above equation turns to 
	\begin{equation}
		\sum_{z \in \mathscr{N}_{u}} \bigl[\frac{\partial \nabla_{u}}{\partial \nabla_{z}}\bigr]^{2} = n^{-1}\square_{u}^{2}.
	\end{equation} 
	A similar argument for the second term in the RHS of (\ref{eq:main}) yields 
	\begin{equation}
		\sum_{z \in \mathscr{N}_{u}} \bigl[\frac{\partial \nabla_{z}}{\partial t}\bigr]^{2} = n^{-1}\boxplus_{u}^{2}.
	\end{equation}. 
	So, we have
	\begin{equation}
		\bigl[\frac{\partial \nabla_{u}}{\partial t}\bigr]^{2} \le n^{-2}\square_{u}^{2}\boxplus_{u}^{2},
	\end{equation} 
	that leads to the desired result, i.e., 
	\begin{equation}
		-n^{-1}[\square_{u}\boxplus_{u}]\le \frac{\partial \nabla_{u}}{\partial t} \le n^{-1}[\square_{u}\boxplus_{u}].
	\end{equation}
	\section{The Proof of Proposition \ref{prop:mtdc}}
	\label{app:mtdc}
	According to Theorem \ref{thm:dtd}, differential traffic divergence dynamics (\ref{eq:xxxx}), in particular the case $n=1$, may be rearranged as below for $u$ and $v$ and their adjacent nodes in $\mathscr{N}_u$ and $\mathscr{N}_v$.
	\begin{align}
		\frac{\partial\nabla_{u}}{\partial\nabla_{z}} = \frac{\partial\nabla_{z}}{\partial\nabla_{u}} + [\frac{\partial}{\partial\nabla_{u}} - \frac{\partial}{\partial\nabla_{z}}]\nabla_{u,z}\quad(\forall z \in \mathscr{N}_{u})\\
		\frac{\partial\nabla_{v}}{\partial\nabla_{z}} = \frac{\partial\nabla_{z}}{\partial\nabla_{v}} + [\frac{\partial}{\partial\nabla_{v}} - \frac{\partial}{\partial\nabla_{z}}]\nabla_{v,z}\quad(\forall z \in \mathscr{N}_{v})
	\end{align}
	Substituting these two equations into (\ref{eq:mtdc}) completes the proof.
	\section{The Proof of Lemma \ref{lem:symm}}
	\label{app:symm}
	An immediate consequence of $\Delta_{u,v} = \Delta_{w,v} = \gamma$ is $\Delta_{u,w} = \Delta_{w,u} = 1$, which is equivalent to the maximal traffic distribution between $u$ and $v$.
	\section{The Proof of Theorem \ref{thm:loc}}
	\label{app:loc}
	The case $m=1$ in Lemma \ref{lem:symm} corresponds to traffic symmetry under maximal traffic distribution. So, for any arbitrary number of nodes, if the relative traffic rate between every two neighboring node is maximally distributed, then any relative traffic rate between two arbitrary distant nodes has to also be maximally distributed.
	\section{The Proof of Proposition \ref{prop:cc}}
	\label{app:cc}
	In the worst case, the length of an optimal route is $n$. Thus, the computational complexity of the do-while block of the algorithm is $\mathcal{O}\big(n^{3}\big)$. The while loop may be executed as many as $n-1$ times. In particular, Each of Lines 9, 11, 12, 13 may be executed in one operation. Defining the maximum number of neighbors associated with a node as 
	\begin{equation}
		Y \defeq \max_{z \in \mathscr{K}} \mathscr{N}_{z},
	\end{equation} 
	Line 10 requires $Y$ operations. Thus, since in general $n\gg Y$, the overall computational complexity reads as $	\mathcal{O}\big(n^{3}+(n-1)(Y+4)\big)\equiv\mathcal{O}\big(n^{3})$, which is what the proposition claims.
	\section{The Proof of Theorem \ref{thm:throughput}}
	\label{app:throughput}
	(i) \textit{Uni-regular case}: The traffic reaching a switch $u \in \mathscr{K}$ belongs to either of the following categories: it is destined to reach $u$ as its target, or it corresponds to transient traffic. The cited transient traffic $\Lambda_{u}^\mathscr{T}$ is constrained with respect to the capacity of the switch that, given the total number of switches $\mathscr{R}_{u}$ and the number of outage ports $\mathscr{Q}_{u}$, reads as follows.
	\begin{equation}
		\label{eq:temp3}
		\Lambda_{u}^\mathscr{T} + \theta^\mathscr{T}\sum_{v \in \mathscr{K}\setminus\{u\}} \mathscr{T}_{uv} = \mathscr{R}_{u} - \mathscr{Q}_{u} - \mathscr{H}
	\end{equation}  
	Consequently, noting that \cite{namyar2021throughput}
	\begin{equation}
		\label{eq:temp4}
		\sum_{u \in \mathscr{K}}[\mathscr{R}_{u} - \mathscr{Q}_{u} - \mathscr{H}] = 2\mathscr{E},
	\end{equation}
	the total transient traffic of the topology is obtained as below.
	\begin{equation}
		\label{eq:first}
		\sum_{u \in \mathscr{K}} \Lambda_{u}^{\mathscr{T}} = 2\mathscr{E} - \theta^{\mathscr{T}}\sum_{u \in \mathscr{K}}\sum_{v \in \mathscr{K}\setminus\{u\}}\mathscr{T}_{uv}
	\end{equation}
	Alternatively, one may compute the total transient traffic of the topology based the route set $\psi_{u,v}$ between arbitrary switches $u,v \in \mathscr{K}$ computed by Algorithm 1, i.e., the congestion-avoiding minimum-hop routing mechanism.
	\begin{equation}
		\label{eq:x}
		\sum_{u \in \mathscr{K}}\Lambda_{u}^{\mathscr{T}} = \theta^{\mathscr{T}}\sum_{u \in \mathscr{K}}\sum_{v \in \mathscr{K}\setminus\{u\}} t_{uv} \sum_{\rho \in \psi_{u,v}} \beta_{\rho}^{\mathscr{T}}(|\rho| - 1)
	\end{equation} 
	Here, $|\rho|$ is the length of route $\rho$ and $\beta_{\rho}^{\mathscr{T}}$ is the split ratio of $\rho$. This ratio determines the contribution of this route to the carriage of the cited total transient traffic. The lower the traffic of a switch is, the higher its split ratio shall be to minimize the likelihood of any congestion, implying 
	\begin{equation}
		\beta_{\rho}^{\mathscr{T}} \propto \nabla_{\rho}^{-1},
	\end{equation}
	that using the result of Proposition \ref{prop:rrtd} leads to
	\begin{equation}
		\beta_{\rho}^{\mathscr{T}} \propto \big(\sum_{w \in \rho }\nabla_{w}\big)^{-1}.
	\end{equation} 
	The overall contributions of split ratios associated with all found routes shall be equal to unit. Thus, the co-efficient set (\ref{eq:coeff}), one may write 
	\begin{equation}
		\sum_{\rho \in \psi_{u,v}}\beta_{\rho}^{\mathscr{T}} = \alpha_{\rho}\big(\sum_{w \in \rho }\nabla_{w}\big)^{-1} = 1,
	\end{equation}
	which turns (\ref{eq:x}) to
	\begin{equation}
		\label{eq:second}
		\begin{split}
			\sum_{u \in \mathscr{K}}\Lambda_{u}^{\mathscr{T}} = \theta^{\mathscr{T}}\sum_{u \in \mathscr{K}}\sum_{v \in \mathscr{K}\setminus\{u\}} \mathscr{T}_{uv} \Big[\smashoperator[r]{\sum_{\rho \in \psi_{u,v}}} &\alpha_{\rho}\Big(\sum_{w \in \rho}\nabla_{w}\Big)^{-1}\\&\Big(|\rho| - 1\Big)\Big].
		\end{split}
	\end{equation}
	The double stochasticity of $\mathscr{T}$ implies that the summations of all of its rows and columns are equal to $\mathscr{H}$. Finally, by the equality of (\ref{eq:first}) and (\ref{eq:second}) and some simplifications, we obtain
	\begin{equation}
		\theta^\mathscr{T} = \frac{2\mathscr{E}}{\mathscr{H}}\Biggl[1 + \smashoperator[l]{\sum\limits_{\substack{u \in \mathscr{K}, v \in \mathscr{K}\setminus\{u\},\\\rho \in \psi_{u,v}}}} \Bigg[\alpha_{\rho}|\rho| \Big[\sum_{w \in \rho}^{} \nabla_{w}\Big]^{-1}\Bigg] - |\psi_{u,v}| \Biggr]^{-1},
	\end{equation}
	which is the theorem's claim.
	
	\noindent(ii) \textit{Bi-regular case}: The capacity of each switch is variable depending on whether it is connected to servers or solely to other switches. Formally speaking, given the selector function below applied to an arbitrary switch $u \in \mathscr{K}$  
	\begin{equation}
		\mathscr{S}(u) \defeq
		\begin{cases}
			1\quad \text{if } u \text{ is connected to any servers}\\
			0\quad \text{otherwise}
		\end{cases},
	\end{equation}  
	(\ref{eq:temp3}) is written as 
	\begin{equation}
		\Lambda_{u}^\mathscr{T} + \theta^\mathscr{T}\sum_{v \in \mathscr{K}\setminus\{u\}} \mathscr{T}_{uv} = \mathscr{R}_{u} - \mathscr{Q}_{u} - \mathscr{H}_{u}\mathscr{S}(u).
	\end{equation}
	Then, (\ref{eq:temp4}) turns into 
	\begin{equation}
		\sum_{u \in \mathscr{K}}[\mathscr{R}_{u} - \mathscr{Q}_{u} - \mathscr{H}_{u}\mathscr{S}(u)] = 2\mathscr{E}.
	\end{equation}
	However, (\ref{eq:first}) is invariant with respect to this change. So, the remainder of the proof for the bi-regular case follows that of the uni-regular one.
\bibliographystyle{IEEEtran} 
\bibliography{references}
\end{document}